\documentclass[11pt,onecolumn,draftclsnofoot]{IEEEtran}
\usepackage{fancyhdr}
\usepackage{amsmath,epsfig}
\usepackage{threeparttable}
\usepackage{epsf,epsfig}
\usepackage{amsmath}
\usepackage{amssymb}
\usepackage{amsfonts}
\usepackage[noadjust]{cite}
\usepackage{dsfont}
\usepackage{subfigure}
\usepackage{url}

\pagestyle{fancyplain}
\lhead[]%
    { \footnotesize \today \hspace{20pt} }
\rhead[]%
    { \thepage}
\cfoot{}
\textwidth=6.515in
\textheight=23.3cm

\renewcommand{\baselinestretch}{1.4}

\newcommand{\tPout}{\tilde{P}_{\mathsf{out}}}
\newcommand{\hPout}{\hat{P}_{\mathsf{out}}}
\addtolength{\textfloatsep}{-15pt}

\def\[{\left[}
\def\]{\right]}

\begin{document}

\newtheorem{theorem}{Theorem}
\newtheorem{acknowledgement}[theorem]{Acknowledgement}
\newtheorem{axiom}[theorem]{Axiom}
\newtheorem{case}[theorem]{Case}
\newtheorem{claim}[theorem]{Claim}
\newtheorem{conclusion}[theorem]{Conclusion}
\newtheorem{condition}[theorem]{Condition}
\newtheorem{conjecture}[theorem]{Conjecture}
\newtheorem{criterion}[theorem]{Criterion}
\newtheorem{definition}[theorem]{Definition}
\newtheorem{example}[theorem]{Example}
\newtheorem{exercise}[theorem]{Exercise}
\newtheorem{lemma}{Lemma}
\newtheorem{corollary}{Corollary}
\newtheorem{notation}[theorem]{Notation}
\newtheorem{problem}[theorem]{Problem}
\newtheorem{proposition}{Proposition}
\newtheorem{solution}[theorem]{Solution}
\newtheorem{summary}[theorem]{Summary}
\newtheorem{assumption}{Assumption}
\newtheorem{examp}{\bf Example}
\newtheorem{probform}{\bf Problem}
\def\remark{{\noindent \bf Remark:\hspace{0.5em}}}

\def\qed{$\Box$}
\def\QED{\mbox{\phantom{m}}\nolinebreak\hfill$\,\Box$}
\def\proof{\noindent{\emph{Proof:} }}
\def\poof{\noindent{\emph{Sketch of Proof:} }}
\def
\endproof{\hspace*{\fill}~\qed
\par
\endtrivlist\unskip}
\def\endproof{\hspace*{\fill}~\qed\par\endtrivlist\vskip3pt}

\def\E{\mathbb{E}}
\def\eps{\varepsilon}
\def\phi{\varphi}
\def\Lsp{{\boldsymbol L}}
\def\Bsp{{\boldsymbol B}}
\def\lsp{{\boldsymbol\ell}}
\def\Ltsp{{\Lsp^2}}
\def\Lpsp{{\Lsp^p}}
\def\Linsp{{\Lsp^{\infty}}}
\def\LtR{{\Lsp^2(\Rst)}}
\def\ltZ{{\lsp^2(\Zst)}}
\def\ltsp{{\lsp^2}}
\def\ltZt{{\lsp^2(\Zst^{2})}}
\def\ninN{{n{\in}\Nst}}
\def\oh{{\frac{1}{2}}}
\def\grass{{\cal G}}
\def\ord{{\cal O}}
\def\dist{{d_G}}
\def\conj#1{{\overline#1}}
\def\ntoinf{{n \rightarrow \infty }}
\def\toinf{{\rightarrow \infty }}
\def\tozero{{\rightarrow 0 }}
\def\trace{{\operatorname{trace}}}
\def\ord{{\cal O}}
\def\UU{{\cal U}}
\def\rank{{\operatorname{rank}}}
\def\acos{{\operatorname{acos}}}

\def\SINR{\mathsf{SINR}}
\def\SNR{\mathsf{SNR}}
\def\SIR{\mathsf{SIR}}
\def\tSIR{\widetilde{\mathsf{SIR}}}
\def\Ei{\mathsf{Ei}}
\def\l{\left}
\def\r{\right}
\def\({\left(}
\def\){\right)}
\def\lb{\left\{}
\def\rb{\right\}}

\setcounter{page}{1}

\newcommand{\eref}[1]{(\ref{#1})}
\newcommand{\fig}[1]{Fig.\ \ref{#1}}

\def\bydef{:=}
\def\ba{{\mathbf{a}}}
\def\bb{{\mathbf{b}}}
\def\bc{{\mathbf{c}}}
\def\bd{{\mathbf{d}}}
\def\bee{{\mathbf{e}}}
\def\bff{{\mathbf{f}}}
\def\bg{{\mathbf{g}}}
\def\bh{{\mathbf{h}}}
\def\bi{{\mathbf{i}}}
\def\bj{{\mathbf{j}}}
\def\bk{{\mathbf{k}}}
\def\bl{{\mathbf{l}}}
\def\bm{{\mathbf{m}}}
\def\bn{{\mathbf{n}}}
\def\bo{{\mathbf{o}}}
\def\bp{{\mathbf{p}}}
\def\bq{{\mathbf{q}}}
\def\br{{\mathbf{r}}}
\def\bs{{\mathbf{s}}}
\def\bt{{\mathbf{t}}}
\def\bu{{\mathbf{u}}}
\def\bv{{\mathbf{v}}}
\def\bw{{\mathbf{w}}}
\def\bx{{\mathbf{x}}}
\def\by{{\mathbf{y}}}
\def\bz{{\mathbf{z}}}
\def\b0{{\mathbf{0}}}

\def\bA{{\mathbf{A}}}
\def\bB{{\mathbf{B}}}
\def\bC{{\mathbf{C}}}
\def\bD{{\mathbf{D}}}
\def\bE{{\mathbf{E}}}
\def\bF{{\mathbf{F}}}
\def\bG{{\mathbf{G}}}
\def\bH{{\mathbf{H}}}
\def\bI{{\mathbf{I}}}
\def\bJ{{\mathbf{J}}}
\def\bK{{\mathbf{K}}}
\def\bL{{\mathbf{L}}}
\def\bM{{\mathbf{M}}}
\def\bN{{\mathbf{N}}}
\def\bO{{\mathbf{O}}}
\def\bP{{\mathbf{P}}}
\def\bQ{{\mathbf{Q}}}
\def\bR{{\mathbf{R}}}
\def\bS{{\mathbf{S}}}
\def\bT{{\mathbf{T}}}
\def\bU{{\mathbf{U}}}
\def\bV{{\mathbf{V}}}
\def\bW{{\mathbf{W}}}
\def\bX{{\mathbf{X}}}
\def\bY{{\mathbf{Y}}}
\def\bZ{{\mathbf{Z}}}

\def\mA{{\mathbb{A}}}
\def\mB{{\mathbb{B}}}
\def\mC{{\mathbb{C}}}
\def\mD{{\mathbb{D}}}
\def\mE{{\mathbb{E}}}
\def\mF{{\mathbb{F}}}
\def\mG{{\mathbb{G}}}
\def\mH{{\mathbb{H}}}
\def\mI{{\mathbb{I}}}
\def\mJ{{\mathbb{J}}}
\def\mK{{\mathbb{K}}}
\def\mL{{\mathbb{L}}}
\def\mM{{\mathbb{M}}}
\def\mN{{\mathbb{N}}}
\def\mO{{\mathbb{O}}}
\def\mP{{\mathbb{P}}}
\def\mQ{{\mathbb{Q}}}
\def\mR{{\mathbb{R}}}
\def\mS{{\mathbb{S}}}
\def\mT{{\mathbb{T}}}
\def\mU{{\mathbb{U}}}
\def\mV{{\mathbb{V}}}
\def\mW{{\mathbb{W}}}
\def\mX{{\mathbb{X}}}
\def\mY{{\mathbb{Y}}}
\def\mZ{{\mathbb{Z}}}

\def\cA{\mathcal{A}}
\def\cB{\mathcal{B}}
\def\cC{\mathcal{C}}
\def\cD{\mathcal{D}}
\def\cE{\mathcal{E}}
\def\cF{\mathcal{F}}
\def\cG{\mathcal{G}}
\def\cH{\mathcal{H}}
\def\cI{\mathcal{I}}
\def\cJ{\mathcal{J}}
\def\cK{\mathcal{K}}
\def\cL{\mathcal{L}}
\def\cM{\mathcal{M}}
\def\cN{\mathcal{N}}
\def\cO{\mathcal{O}}
\def\cP{\mathcal{P}}
\def\cQ{\mathcal{Q}}
\def\cR{\mathcal{R}}
\def\cS{\mathcal{S}}
\def\cT{\mathcal{T}}
\def\cU{\mathcal{U}}
\def\cV{\mathcal{V}}
\def\cW{\mathcal{W}}
\def\cX{\mathcal{X}}
\def\cY{\mathcal{Y}}
\def\cZ{\mathcal{Z}}
\def\cd{\mathcal{d}}
\def\Mt{M_{t}}
\def\Mr{M_{r}}
\def\O{\Omega_{M_{t}}}
\newcommand{\figref}[1]{{Fig.}~\ref{#1}}
\newcommand{\tabref}[1]{{Table}~\ref{#1}}

\newcommand{\var}{\mathsf{var}}
\newcommand{\fb}{\tx{fb}}
\newcommand{\nf}{\tx{nf}}
\newcommand{\BC}{\tx{(bc)}}
\newcommand{\MAC}{\tx{(mac)}}
\newcommand{\Pout}{P_{\mathsf{out}}}
\newcommand{\nnn}{\nn\\}
\newcommand{\FB}{\tx{FB}}
\newcommand{\TX}{\tx{TX}}
\newcommand{\RX}{\tx{RX}}
\renewcommand{\mod}{\tx{mod}}
\newcommand{\m}[1]{\mathbf{#1}}
\newcommand{\td}[1]{\tilde{#1}}
\newcommand{\sbf}[1]{\scriptsize{\textbf{#1}}}
\newcommand{\stxt}[1]{\scriptsize{\textrm{#1}}}
\newcommand{\suml}[2]{\sum\limits_{#1}^{#2}}
\newcommand{\sumlk}{\sum\limits_{k=0}^{K-1}}
\newcommand{\eqhsp}{\hspace{10 pt}}
\newcommand{\tx}[1]{\texttt{#1}}
\newcommand{\Hz}{\ \tx{Hz}}
\newcommand{\sinc}{\tx{sinc}}
\newcommand{\tr}{\mathrm{tr}}
\newcommand{\diag}{\mathrm{diag}}
\newcommand{\MAI}{\tx{MAI}}
\newcommand{\ISI}{\tx{ISI}}
\newcommand{\IBI}{\tx{IBI}}
\newcommand{\CN}{\tx{CN}}
\newcommand{\CP}{\tx{CP}}
\newcommand{\ZP}{\tx{ZP}}
\newcommand{\ZF}{\tx{ZF}}
\newcommand{\SP}{\tx{SP}}
\newcommand{\MMSE}{\tx{MMSE}}
\newcommand{\MINF}{\tx{MINF}}
\newcommand{\RC}{\tx{MP}}
\newcommand{\MBER}{\tx{MBER}}
\newcommand{\MSNR}{\tx{MSNR}}
\newcommand{\MCAP}{\tx{MCAP}}
\newcommand{\vol}{\tx{vol}}
\newcommand{\ah}{\hat{g}}
\newcommand{\tg}{\tilde{g}}
\newcommand{\teta}{\tilde{\eta}}
\newcommand{\heta}{\hat{\eta}}
\newcommand{\uh}{\m{\hat{s}}}
\newcommand{\eh}{\m{\hat{\eta}}}
\newcommand{\hv}{\m{h}}
\newcommand{\hh}{\m{\hat{h}}}
\newcommand{\Po}{P_{\mathrm{out}}}
\newcommand{\Poh}{\hat{P}_{\mathrm{out}}}
\newcommand{\Ph}{\hat{\gamma}}
\newcommand{\mat}[1]{\begin{matrix}#1\end{matrix}}
\newcommand{\ud}{^{\dagger}}
\newcommand{\C}{\mathcal{C}}
\newcommand{\nn}{\nonumber}
\newcommand{\nInf}{U\rightarrow \infty}

\title{\huge \setlength{\baselineskip}{30pt} Spectrum Sharing Between Cellular and Mobile Ad Hoc Networks: Transmission-Capacity Trade-Off}

\author{Kaibin Huang, Vincent K. N. Lau, Yan Chen\thanks{\setlength{\baselineskip}{15pt}
K. Huang, V. K. N. Lau, and Y. Chen are with Department of Electronic and Computer Engineering, Hong Kong University of Science and Technology, Hong Kong. Email: khuang@ieee.org, eeknlau@ust.hk, yanchen@ust.hk. Y. Chen is also affiliated with Institute of Information \& Communication Engineering, Zhejiang University, Hangzhou, 310027, P.R. China.
}\vspace{-50pt}}

\maketitle

\begin{abstract}\vspace{-10pt}
Spectrum sharing between wireless networks improves the efficiency of spectrum usage, and thereby alleviates spectrum scarcity due to growing demands for wireless broadband access.  To improve the usual underutilization  of the cellular uplink spectrum,  this paper studies spectrum sharing between a cellular uplink and a mobile ad hoc networks. These networks access either all frequency sub-channels or their disjoint sub-sets, called \emph{spectrum underlay} and \emph{spectrum overlay}, respectively. Given these spectrum sharing methods, the capacity trade-off between the coexisting networks is analyzed based on the {\it transmission capacity} of a network with Poisson distributed transmitters. This metric is defined as  the maximum density of transmitters subject to an outage constraint for a given signal-to-interference ratio (SIR). Using tools from stochastic geometry,  the transmission-capacity trade-off between the coexisting networks is analyzed, where both spectrum overlay and underlay as well as successive interference cancelation (SIC) are considered. In particular, for small target outage probability, the transmission capacities of the coexisting networks are proved to satisfy a linear equation, whose coefficients depend on the spectrum sharing method and whether SIC is applied. This linear equation shows that spectrum overlay is more efficient than spectrum underlay. Furthermore, this result also provides insight into the effects of different network parameters on transmission capacities, including link diversity gains, transmission distances, and the base station density.  In particular, SIC is shown to increase transmission capacities of both coexisting networks by a linear factor, which depends on the interference-power threshold for qualifying canceled interferers.

\end{abstract}
\vspace{-10pt}
\begin{keywords}\vspace{-10pt}
Spatial reuse;  wireless networks; Poisson processes; spectrum sharing; interference cancellation
\end{keywords}
\vspace{-10pt}
\section{Introduction}\label{Section:Intro}
Despite spectrum scarcity, most licensed spectrum are underutilized according to Federal Communications Commission \cite{FCC:SpectrumPolicyTaskForce}. In particular, in existing cellular systems based on frequency division duplex (FDD) such as FDD UMTS \cite{UMTS}, equal bandwidths are allocated for uplink and downlink transmissions, even though the data traffic for downlink is much heavier than that for uplink \cite{Marques:OppUse3GUplink:2008, KimJeong:CapUnbalanceULDL:CDMA:2000}. Spectrum sharing between wireless networks improves spectrum utilization, and will be a key solution  for broadband access in next-generation wireless networks \cite{Akyildiz:NextGenDynamSpectAccessCognitiveRadioSurvey:2006}. This motivates the study in this paper on sharing uplink spectrum between a cellular network and a mobile ad hoc network (MANET), which are referred to as the \emph{coexisting networks}. A basic question is then how is the trade-off between the capacities of these networks.

We provide answers to this question  in terms of the {\it transmission capacities } of the coexisting networks consisting of Poisson distributed transmitters. By extending the definition in \cite{WeberAndrews:TransCapWlssAdHocNetwkOutage:2005}, this metric is defined as the maximum weighted sum of the transmitter densities  of the coexisting networks so that all links will  satisfy an outage probability constraint for a target signal-to-interference ratio (SIR), where the weights depend on the spectrum-sharing method. We derive the transmission-capacity trade-off between the networks for different spectrum-sharing methods. Such results are useful for controlling the sizes of the coexisting networks for optimizing  uplink spectrum usage.

\subsection{Related Work and Motivation}
A  spectrum band can be either \emph{licensed}  or \emph{unlicensed}, where a license gives a network the exclusive right of spectrum usage. Depending on whether holding a licence, a wireless network is referred to as the \emph{primary} (e.g. cellular networks) or \emph{secondary} network (e.g. MANETs). Accessing a licensed band, the transmitters in a secondary network, called \emph{secondary transmitters},  must not cause significant interference to the receivers in the primary network, called \emph{primary receivers}.  One simple method of sharing a licensed band is to spread the signal energy radiated by each secondary transmitter over the whole band using spread spectrum techniques \cite{SimonBook:SpreadSpectrumHandbook:94}, suppressing the power spectrum density of the resultant interference to the primary receivers. This method is called  \emph{spectrum underlay} \cite{Zhao:SurveyDynamSpectAccess:2007, Akyildiz:NextGenDynamSpectAccessCognitiveRadioSurvey:2006, FCC:SpectrumPolicyTaskForce, MenonReed:OutageCompareUnderlayOverlaySpectrumSharing:2005}.

Another method for sharing licensed spectrum is called \emph{spectrum overlay}, where secondary transmitters access frequency sub-channels unused by nearby primary receivers. Recent research on spectrum overlay has been focusing on designing \emph{cognitive-radio} algorithms for secondary transmitters to opportunistically access the spectrum by exploiting the spatial and temporal traffic dynamic of the primary network \cite{Haykin:CognitiveRadio:2005, Zhao:SurveyDynamSpectAccess:2007, Akyildiz:NextGenDynamSpectAccessCognitiveRadioSurvey:2006}. Such algorithms require secondary transmitters to continuously detect  and track transmission opportunities by spectrum sensing, and decide on transmission based on sensing results  \cite{Zhao:CognitiveMACOppSpectrumAccess:POMDP:2007, Wild:DetectPrimRxCognitiveRadio:2005, Mishra:CoopSenseCogRadio:2006}. Such algorithms are vulnerable to sensing errors, and most important require complicated computation at the secondary transmitters, which usually have limited computational power.  For this reason, we consider the case where base stations in the cellular (primary) network coordinates spectrum overlay. Thereby ad hoc (secondary) transmitters use a simple random access  protocol rather than complicated cognitive-radio algorithms.

In unlicensed spectrum such as the \emph{industrial, scientific and medical} (ISM) bands, all networks have equal priorities for spectrum access. The networks using unlicensed bands include wireless local area networks (WLANs) \cite{IEEE802-11} and  wireless personal area networks (WPANs) \cite{IEEE802-15}. Due to mutual interference,  the coexistence of networks in the unlicensed bands degrades the networks' performance as shown by analysis \cite{Howitt:WLAN/WPANCoexist:2001, Han:OutageWPANWLANCoexistFading:2008}, simulation \cite{Thanthry:WLANAdHocNetworkCoexist:2004, Golmie:InterfEvalBluetoothWLAN:2003},  and measurement \cite{Angrisani:ExperimCoexistWLANWPAN:2008, Pollin:CoexistWPANWLANMeasurement:2008}. Sharing of unlicensed bands between competing networks is also studied using game theory \cite{Berlemann:RadioResourseSharingQosUnlicensedBand:2005, EtkinTse:SpectrumSharUnlicensedBands:2007}.

There exist few theoretical results on the network capacity trade-off between networks sharing spectrum despite this being a basic question. In \cite{ChanAndrews:UplinkCapInterfAvoidTwoTierFemtocell:2008}, the transmission capacities of a two-tier network are analyzed, which consists of a cellular network   and a network of femtocell hot-spots. In \cite{YinCui:ScalingLawsOverlaidWlssNetworks:2006}, the transport capacities\footnote{This metric introduced in \cite{GuptaKumar:CapWlssNetwk:2000} refers to end-to-end throughput per unit distance of a multi-hope wireless network.} of two coexisting multi-hop ad hoc networks are shown  to follow the optimum scaling laws for an asymptotically large number of network nodes. In  \cite{ChanAndrews:UplinkCapInterfAvoidTwoTierFemtocell:2008, YinCui:ScalingLawsOverlaidWlssNetworks:2006},  the network-capacity trade-off between coexisting networks is not analyzed.

The transmission capacity is used as the performance metric in this paper \cite{WeberAndrews:TransCapWlssAdHocNetwkOutage:2005}. Recently, this metric has been employed for analyzing different types of MANETs with Poisson distributed transmitters and an ALOHA-like medium-access-control layer, including spatial diversity \cite{WeberAndrews:TransCapAdHocNetwkDistSch:2006}, opportunistic transmissions \cite{WeberAndrews:TransCapAdHocNetwkDistSch:2006}, distributed scheduling \cite{HasanAndrews:GuardZoneAdHocNet:2007}, bandwidth partitioning \cite{JindalAndrews:BandwidthPartitioning:2007}, successive interference cancellation (SIC) \cite{WeberAndrews:TransCapWlssAdHocNetwkSIC:2005}, and spatial interference cancelation \cite{Huang:SpatialInterfCancel:2008} in MANETs.

\subsection{Contributions and Organization}
Our main contributions are summarized as follows.  The paper targets a cellular uplink network and a MANET sharing the uplink spectrum using either  spectrum overlay or underlay, where uplink users, base stations, and ad hoc transmitters all follow Poisson distributions but with different densities. Each transmitter modulates signals using frequency-hopping spread spectrum over the frequency sub-channels assigned to the corresponding network \cite{SimonBook:SpreadSpectrumHandbook:94}. First, considering an interference-limited environment, bounds on the SIR outage probabilities are derived for spectrum overlay and underlay with and without using SIC at receivers \cite{Andrews:InterfCancelCellularSys:ContempView:2005, WeberAndrews:TransCapWlssAdHocNetwkSIC:2005}.
Second, for small target outage probability, the transmission-capacities of the coexisting networks are showed to satisfy  a linear equation, whose coefficients depend on the overlay method and whether SIC is used.  Define the \emph{capacity region } as the set of feasible combinations of transmission capacities. Third, for small target outage probability, the capacity region for spectrum underlay is shown to be no larger than that for spectrum overlay. The former can be enlarged to be identical to the latter by choosing the transmission-power ratio between the two networks as derived. Finally, we characterize the effects of different parameters on transmission capacities of the coexisting networks. In particular, depending on whether using spectrum overlay or underlay, the transmission capacity of one or both networks grows linearly with the increasing base station density, linearly with the increasing spatial diversity gains raised to a fractional power, inversely with the decreasing distance between an ad hoc transmitter and its intended receiver. Moreover, SIC increases both transmission capacities by a linear factor that is a function of the interference-power threshold for qualifying canceled interferers.

Simulation results are also presented. As observed from these results, the derived bounds on outage probabilities are tight for different spectrum sharing methods with and without SIC. In particular, the outage probabilities converge to their lower bounds as the transmitter density decreases. Furthermore, the transmission capacity trade-off curves derived for asymptotically small target outage probabilities are observed to match simulation results in the non-asymptotic regime.

The remainder of this paper is organized as follows. The network and wireless channel models are described in Section~\ref{Section:NetworkModel}. In Section~\ref{Section:Outage}, the bounds on outage probabilities are  derived for different spectrum sharing methods. For small target outage probability, the  transmission-capacity trade-off is analyzed in Section~\ref{Section:TxCap}. Numerical and simulation results are presented in Section~\ref{Section:Numerical}, followed by concluding remarks in Section~\ref{Section:Conlusion}.

\section{Network Model}\label{Section:NetworkModel}

\subsection{Network Architecture}\label{Section:Sys:NetwkArch}
The spectrum-sharing cellular and ad hoc networks, referred to simply as \emph{coexisting networks},  are illustrated in Fig.~\ref{Fig:Network}. Following \cite{Baccelli:AlohaProtocolMultihopMANET:2006, WeberAndrews:TransCapWlssAdHocNetwkOutage:2005, WeberAndrews:TransCapAdHocNetwkDistSch:2006}, the transmitters in the MANET are modeled as a Poisson point process (PPP) on the two-dimensional plane, denoted as $\tilde{\Pi}$ with the density $\tilde{\lambda}$. Each transmitter in the MANET is associated with a receiver located at a fixed distance denoted as $\tilde{d}$.\footnote{Consideration of the randomness in $\tilde{d}$ does not provide little insight. It is straightforward to extend the results in this paper to include the randomness in $\tilde{d}$.} The transmission power of transmitters  is assumed fixed and denoted as $\tilde{\rho}$.

For the cellular network, the base stations and uplink users are modeled as two independent homogeneous PPPs denoted as $\Omega$ and $\Pi$, respectively. Their corresponding densities are represented by $\lambda_b$ and $\lambda$. Let $B_n$, $U_m$, $D_{n,m}$  denote the two-dimensional coordinates of the $n$th base station, the $m$th uplink user, and their distance, respectively. Thus,  $D_{n,m} = |B_n -U_m|$.\footnote{The operator $|X|$ gives the Euclidean distance between $X$ and the origin if $X$ is two-dimensional coordinates, or the cardinality of $X$ if $X$ is a set.}  To enhance the long-term link reliability, each uplink user transmits to the nearest base station. Consequently, the cellular network forms a \emph{Poisson tessellation} of the two-dimensional plane and each cell is known as a \emph{Voronoi} cell \cite{Okabe00:SpatialTesse}. The uplink users in the cell served by the $m$th base station, denoted as $\mathcal{V}_m$, is given as \cite{Okabe00:SpatialTesse}
\begin{equation}
\mathcal{V}_m = \l\{U\in\Pi\left| |U - B_m| < |U - B|\ \forall \ B \in \Omega\backslash\{B_m\} \r.\r\}.
\end{equation}
Based on their distances from the serving base station, the users in each cell are separated into \emph{inner-cell} and \emph{cell-edge} users as follows. Consider the largest disk centered at $B_m$ and contained inside the $m$th Voronoi cell, and represent this disk using $\mathcal{D}_m$. Specifically \cite{FossZuyev:VoronoiProcessPoisson:1996}
\begin{equation}\label{Eq:InnerDisk}
\mathcal{D}_m = \l\{Z \in \mathds{R}^2 \l| |Z - B_m| \leq \frac{1}{2}|B-B_m|\ \forall \ B \in \Omega\backslash\{B_m\} \r. \r\}.
\end{equation}
Using the above definition, the inner-cell and cell-edge users in the $m$th cell are separated depending on whether they lie inside or outside the disk. In other words, the sets of inner-cell and cell-edge users are  $\{U\mid U\in \mathcal{V}_m \cap \mathcal{D}_m\}$ and $\{U\mid U\in \mathcal{V}_m \cap \mathcal{D}_m^c\}$, respectively, where $\mathcal{D}_m^c=\mathds{R}^2/\mathcal{D}_m$.
Typically, direct links between cell-edge users and their serving base stations are severely attenuated by pass loss. As a result, direct transmissions from these users to base stations are potentially difficult due to the required large transmission power. Furthermore, such direct transmissions cause strong interference to nearby users and ad hoc receivers. For these reasons, the uplink transmissions of  cell-edge users are assumed to be assisted by relay stations near cell edges \cite{ZhangLau:CoopRelayNexGenWlssSys:2007}. For simplicity, it is assumed that by relay transmission the SIR outage probabilities of the cell-edge users are no larger than those of inner-cell users. Thereby it is sufficient to consider only inner-cell users in the analysis.

\subsection{Channel and Modulation}\label{Section:ChanModel}
The uplink spectrum is divided into $M$ frequency-flat sub-channels by using \emph{orthogonal frequency division multiplexing} (OFDM) \cite{GoldsmithBook:WirelessComm:05}. Each of the coexisting networks uses a subset or the full set of sub-channels, depending on the spectrum sharing methods discussed in Section~\ref{Section:Sys:Overlay}. In each network, a transmitter modulates signals using frequency-hopping spread spectrum, where signals hope randomly over all sub-channels assigned to the network \cite{SimonBook:SpreadSpectrumHandbook:94, WeberAndrews:TransCapWlssAdHocNetwkOutage:2005}.

Consider the link between a typical user and the serving base station, denoted as $U_0$ and $B_0$, respectively. A typical sub-channel accessed by $U_0$ consists of path loss and a fading factor denoted by $W$ such that the signal power received by $B_0$ is $\rho WD^{-\alpha}$, where $\rho$ is the transmission power and $D = |U_0-B_0|$. Similarly, the interference power from an interferer $X$ to $B_0$ is $P_X G_XR_X^{-\alpha}$, where $P_X\in\{\rho, \tilde{\rho}\}$,  $R_X = |X-B_0|$, and $G_X$ is the fading factor.

Similar channel models are used for the ad hoc network. Specifically, the received signal power for a typical receiver, denoted as $T_0$, is $\widetilde{W}\tilde{d}^{-\alpha}$ where $\widetilde{W}$ is the fading factor; the interference power from an interferer $X$ to $T_0$ is $P_XG_XR_X^{-\alpha}$ where $\tilde{R}_X = |X-T_0|$ and $G_X$ is the fading factor mentioned earlier.

\subsection{Spectrum Sharing Methods}\label{Section:Sys:Overlay}
For spectrum overlay, the $M$ sub-channels are divided into two disjoint subsets and assigned to two coexisting networks.\footnote{We assume that different cells use the identical sets of sub-channels. Without this assumption, the users and the ad hoc nodes accessing one particular sub-channel are \emph{non-homogeneous} PPPs. The analysis of this case is complicated and delegated to future work.} Let $K$ and $\tilde{K}$ denote the numbers of sub-channels used by the cellular and ad hoc networks respectively, where $K+\tilde{K}=M$. Spectrum overlay requires initialization, where the cellular network communicates to the MANET the indices of the available sub-channels and the allowable node density. One initialization  method is to use base stations to broadcast control signals to ad hoc nodes. The constraint on the node density can be satisfied by distributed adjustments of nodes' transmission probability, \emph{thinning} the PPP of ad hoc transmitters \cite{StoyanBook:StochasticGeometry:95}. Moreover, $K$ and $\tilde{K}$ can be adapted to the time-varying uplink traffic load, increasing spectrum-usage efficiency at the cost of additional initialization overhead. Next, for spectrum underlay, both coexisting networks use all $M$ sub-channels. Compared with spectrum overlay, spectrum underlay has less initialization overhead as the cellular network need not inform the ad hoc network the indices of available sub-channels.

The transmission capacities of the coexisting networks can be increased by employing SIC at each base station and ad hoc receiver for reducing interference. The SIC model is modified from that in \cite{WeberAndrews:TransCapWlssAdHocNetwkSIC:2005} for making tractable analysis of fading and network coexistence not considered in \cite{WeberAndrews:TransCapWlssAdHocNetwkSIC:2005}. For effective SIC, the SIC model in \cite{WeberAndrews:TransCapWlssAdHocNetwkSIC:2005} requires the interference power from each targeted interferer to be larger than the signal power, and furthermore the average number of canceled interferers is upper bounded. In this paper, by combining these two SIC constraints,  the interference power of each targeted interferer must exceed a threshold equal to the received signal power multiplied by a factor larger than one, denoted as $\kappa$. Increasing $\kappa$ decreases the average number of canceled interferers and vice versa. Finally, perfect SIC is assumed.

\subsection{Transmission Capacity}\label{Section:Sys:TxCap}
Network transmission capacities of the coexisting networks are defined in terms of outage probabilities \cite{WeberAndrews:TransCapWlssAdHocNetwkOutage:2005}. As in \cite{WeberAndrews:TransCapAdHocNetwkDistSch:2006}, the networks are assumed to be interference limited and thus noise is neglected for simplicity.
Consequently, the reliability of received data packets is measured by the SIR. Let $\SIR$ and $\tSIR$ represent the SIRs at the typical user $U_0$ and ad hoc receiver $T_0$, respectively. The correct decoding of received data packets requires the SIRs to exceed a threshold $\theta \geq 1$, identical for all receivers in the networks. In other words, the rate of information sent from a transmitter to a receiver is no less than $\log_2(1+\theta)$ assuming Gaussian signaling. To support this information rate with high probability, the outage probability that $\SIR$ and $\tSIR$ are below $\theta$ must be no larger than a given threshold $0<\epsilon\ll 1$, i.e.
\begin{equation}\label{Eq:Pout:Def}
\Pout(\lambda) := \Pr(\SIR < \theta) \leq \epsilon, \quad \tPout(\tilde{\lambda}) := \Pr(\tSIR < \theta) \leq \epsilon
\end{equation}
where $\Pout$ and $\tPout$ denote the SIR outage probabilities for the cellular and  the ad hoc networks, respectively. The transmission capacities of the cellular and the ad hoc networks, denoted as $C$ and $\tilde{C}$ respectively, are defined as \cite{WeberAndrews:TransCapWlssAdHocNetwkOutage:2005}
\begin{equation}\label{Eq:TxCap}
    C(\epsilon) = (1-\epsilon)\lambda_{\epsilon}, \quad     \tilde{C}(\epsilon) = (1-\epsilon)\tilde{\lambda}_{\epsilon}
\end{equation}
where $\lambda_{\epsilon}$ and $\tilde{\lambda}_{\epsilon}$ satisfy $\Pout(\lambda_{\epsilon}) = \epsilon$ and $\tPout(\tilde{\lambda}_{\epsilon}) = \epsilon$.

\section{Outage Probabilities}\label{Section:Outage}
In this section, the outage probabilities for the coexisting networks are derived for spectrum overlay and underlay with and without SIC.

\subsection{Existing Analytical Approach}\label{Section:Weber}
The analysis in the subsequent sections adopts an existing approach for analyzing the outage probability given a Poisson filed of interferers \cite{WeberAndrews:TransCapWlssAdHocNetwkOutage:2005, WeberAndrews:TransCapAdHocNetwkDistSch:2006, WeberAndrews:TransCapWlssAdHocNetwkSIC:2005, Huang:SpatialInterfCancel:2008, Huang:SpatialInterfCancel:2008}. Based on the network model in Section~\ref{Section:ChanModel}, the aggregate interference power at a receiver in the networks is known as  a \emph{power-law shot noise} process \cite{Lowen:PowerLawShotNoise:1990}. Analyzing outage probabilities require deriving the complementary cumulative density function (CCDF) of such a process, which, unfortunately, has no closed-form expression \cite{Lowen:PowerLawShotNoise:1990, WeberAndrews:TransCapAdHocNetwkDistSch:2006}. For this reason, the existing approach resolves to deriving bounds on  the CCDF as summarized below in the context of the coexisting networks using spectrum overlay.

Without loss of generality, assume that the typical user $U_0$ accesses the $m$th sub-channel. Let $\Pi_m$ represent the process of users using this  sub-channel. By the Marking Theorem~\cite{Kingman93:PoissonProc}, $\Pi_m$  can be shown to be a homogeneous PPP with the density $\lambda/K$. Furthermore, the interferer process $\Pi_m\backslash\{U_0\}$ is also a homogeneous PPP with the same density $\lambda/K$ according to Slivnyak's Theorem \cite{StoyanBook:StochasticGeometry:95}. Define the process of strong interferers for $U_0$ conditioned on the link realization $\{W=w, D =d\}$ as $\Sigma_S(w,d)  = \{X \in \Pi_m\backslash\{U_0\} \mid R_X^{-\alpha}G_X > wd^{-\alpha}\theta^{-1}\}$, where each interferer alone guaranteers the outage for $U_0$. Moreover, the process of weaker interferers is defined as $\Sigma_S^c(w,d) = (\Pi_m\backslash\{U_0\})/\Sigma_S^c(w,d)$.\footnote{Note that the processes $\Sigma_S(w,d)$ and $\Sigma_S^c(w,d)$ are independent as a property of the PPP.} Define the interference power of the weak interferers as $I^c_S(w,d) := \sum_{X\in\Sigma_S^c(w,d)}\rho G_XR_X^{-\alpha}$. Thus, $\Pout$ can be written as
\begin{equation}\label{Eq:Pout:a}
\begin{aligned}
&\Pout  &=& \E\[\Pr\(\Sigma_S(W,D) = \emptyset \mid W,D\)\Pr\(I_S^c(W,D) >  WD^{-\alpha}\theta^{-1}\mid W,D\)\]+\\
&&&\Pr\(\Sigma_S(W,D) \neq \emptyset \).
\end{aligned}
\end{equation}
Considering only the strong interferers leads to a lower bound on $\Pout$, denoted as $\Pout^l$
\begin{equation}
\Pout^l := \E\[\Pr\(\Sigma_S(W,D) \neq \emptyset \)\] = 1 - \E\[e^{-\E\[|\Sigma_S(W,D)|\]}\].\nn
\end{equation}
Let $\Pout^l(w,d)$ represent $\Pout^l$ conditioned on $\{W=w, D=d\}$.  The upper bound on $\Pout$, denoted as $\Pout^u$, is obtained by bounding the term $\Pr\(I_S^c(W,D) >  WD^{-\alpha}\theta^{-1}\)$ in \eqref{Eq:Pout:a} using Chebyshev's inequality
\begin{equation}
\Pr\(I_S^c(W,D) >  WD^{-\alpha}\theta^{-1}\) \leq \frac{\var(I_S^c(W,D))}{\(WD^{-\alpha}\theta^{-1}- \E\[I_S^c(W,D)\]\)^2},\quad \frac{WD^{-\alpha}\theta^{-1}}{\E\[I_S^c(W,D)\]}> 1.
\end{equation}

Using \cite[Theorem~2]{WeberAndrews:TransCapAdHocNetwkDistSch:2006} obtained following the above approach, the bounds on $\Pout$ and $\tPout$ for spectrum overlay are given in the following lemma.
\begin{lemma}[Spectrum Overlay] \label{Lem:Pout:FreqSep}For the coexisting networks based on spectrum overlay, the bounds on SIR outage probabilities are given as follows.
\begin{enumerate}
\item {\bf Cellular network}:
\begin{equation}
\E\[\Pout^l\(W, D, \frac{\lambda}{K}\)\] \leq \Pout(K, \lambda) \leq \E\[\Pout^u\(W, D, \frac{\lambda}{K}\)\]
\end{equation}
where
\begin{eqnarray}
\Pout^l(w, d, \lambda) &=& 1- \exp\l(- \zeta  \lambda w^{-\delta} d^2\right)\label{Eq:PoutLb:Exp}\\
\Pout^u(w, d, \lambda) &=& 1- \xi(w, d, \lambda) \exp\l(-\zeta \lambda w^{-\delta} d^2 \r)\label{Eq:PoutUb:Exp}\\
\xi(w, d, \lambda) &=& \left\{\begin{aligned}
&\left[1-\frac{\frac{\delta}{2-\delta}\zeta d^2w^{-\delta}\lambda }{\left(1-\frac{\delta}{1-\delta}\zeta d^2w^{-\delta}\lambda \right)^2}\right]^+, && \frac{\delta}{1-\delta}\zeta d^2w^{-\delta}\lambda < 1\\
&0, && \text{otherwise}
\end{aligned}\right.
\end{eqnarray}
and $\zeta := \pi \theta^{\delta}\E[G^\delta]$.

\item {\bf MANET}:
\begin{equation}
\E\[\Pout^l\(\widetilde{W}, \tilde{d}, \frac{\tilde{\lambda}}{\widetilde{K}}\)\] \leq \tPout(\tilde{K}, \tilde{\lambda}) \leq \E\[\Pout^u\(\widetilde{W}, \tilde{d}, \frac{\tilde{\lambda}}{\widetilde{K}}\)\]
\end{equation}
where $\Pout^l(\cdot,\cdot, \cdot)$ and $\Pout^u(\cdot,\cdot, \cdot)$ are given in \eqref{Eq:PoutLb:Exp} and \eqref{Eq:PoutUb:Exp}, respectively.

\end{enumerate}
\end{lemma}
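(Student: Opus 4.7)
The plan is to apply the template outlined in Section~\ref{Section:Weber} (essentially Theorem~2 of \cite{WeberAndrews:TransCapAdHocNetwkDistSch:2006}) to each of the two coexisting networks, with the key observation that under spectrum overlay the cellular and ad hoc transmitters occupy disjoint subsets of sub-channels, so each receiver sees interference only from its own network. Because frequency hopping is uniform over the $K$ (resp.\ $\widetilde{K}$) sub-channels, the Marking Theorem implies that the interferers on a typical sub-channel of the cellular (resp.\ ad hoc) network form a homogeneous PPP of density $\lambda/K$ (resp.\ $\tilde\lambda/\widetilde{K}$), and Slivnyak's Theorem lets us condition on the tagged transmitter $U_0$ (resp.\ $T_0$) without altering the law of the remaining points.

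For the cellular network I would first fix the link realization $\{W=w,D=d\}$ and decompose the interferer PPP into the strong set $\Sigma_S(w,d)=\{X:\rho G_X R_X^{-\alpha}>\rho w d^{-\alpha}\theta^{-1}\}$ and its complement. By the Mapping Theorem and the isotropy of the PPP, $\E[|\Sigma_S(w,d)|]$ equals $(\lambda/K)$ times the expected area of the random disk of radius $(G\theta w^{-1}d^\alpha)^{1/\alpha}$, giving the closed form $\zeta(\lambda/K)w^{-\delta}d^2$ with $\zeta=\pi\theta^{\delta}\E[G^{\delta}]$. Void probability of the PPP then yields $\Pr(\Sigma_S(w,d)=\emptyset)=\exp(-\zeta(\lambda/K)w^{-\delta}d^2)$, and since any strong interferer alone forces outage, the lower bound is $\Pout^{l}(w,d,\lambda/K)=1-\exp(-\zeta(\lambda/K)w^{-\delta}d^2)$; averaging over $(W,D)$ gives the left-hand inequality.

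For the upper bound I would start from the decomposition in \eqref{Eq:Pout:a} and control the weak-interferer contribution $I_S^c(w,d)$ via Chebyshev's inequality. The mean and variance of $I_S^c(w,d)$ are computable by Campbell's theorem from the PPP intensity restricted to the weak-interferer region, and both integrals converge for $\delta<1$; substituting these moments into Chebyshev's bound produces the factor $\xi(w,d,\lambda/K)$ with the stated denominator $(1-\frac{\delta}{1-\delta}\zeta d^2 w^{-\delta}\lambda)^2$ and numerator $\frac{\delta}{2-\delta}\zeta d^2 w^{-\delta}\lambda$, the positive-part being necessary because Chebyshev is vacuous once $\E[I_S^c]\geq w d^{-\alpha}\theta^{-1}$. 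Multiplying the void probability by $\xi$ and subtracting from one yields $\Pout^{u}$; expectation over $(W,D)$ gives the right-hand inequality. The main technical obstacle is verifying the moment calculations and the regime condition $\frac{\delta}{1-\delta}\zeta d^2 w^{-\delta}\lambda<1$ under which the Chebyshev step is non-trivial---but this is precisely the content of \cite[Thm.~2]{WeberAndrews:TransCapAdHocNetwkDistSch:2006}, which I can invoke directly after the density substitution $\lambda\mapsto\lambda/K$.

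The ad hoc case is structurally identical: under spectrum overlay the receiver $T_0$ sees only ad hoc interferers, which on its sub-channel form a PPP of density $\tilde\lambda/\widetilde{K}$ by the same marking/thinning argument, and the link realization $(\widetilde{W},\tilde d)$ plays the role of $(W,D)$. Substituting into the general expressions $\Pout^l$ and $\Pout^u$ of \eqref{Eq:PoutLb:Exp}--\eqref{Eq:PoutUb:Exp} and taking expectation over $\widetilde{W}$ (noting $\tilde d$ is deterministic) delivers the second pair of bounds, completing the proof.
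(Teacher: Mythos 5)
Your proposal is correct and follows essentially the same route as the paper, which justifies Lemma~\ref{Lem:Pout:FreqSep} by noting (via the Marking and Slivnyak Theorems) that the per-sub-channel interferer processes are homogeneous PPPs of densities $\lambda/K$ and $\tilde{\lambda}/\widetilde{K}$ and then invoking \cite[Theorem~2]{WeberAndrews:TransCapAdHocNetwkDistSch:2006} together with the strong/weak-interferer decomposition and Chebyshev bound summarized in Section~\ref{Section:Weber}. Your computation of $\E[|\Sigma_S(w,d)|]=\zeta(\lambda/K)w^{-\delta}d^2$ and the resulting void-probability lower bound and $\xi$-factor upper bound match the paper's intended argument exactly.
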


\subsection{Outage Probabilities: Spectrum Underlay}
For spectrum underlay, the SIRs for the coexisting networks can be written as
\begin{eqnarray}
\text{(Cellular)} \quad \SIR &=& \frac{\rho WD^{-\alpha}}{\rho\sum_{X\in\Pi_m\backslash\{U_0\}}G_XR_X^{-\alpha} + \tilde{\rho}\sum_{X\in\tilde{\Pi}_m}G_XR_X^{-\alpha}}\label{Eq:SIR:FreqShare} \\
\text{(MANET)} \quad \tSIR &=& \frac{\tilde{\rho} \widetilde{W}\tilde{d}^{-\alpha}}{\rho\sum_{X\in\Pi_m}G_XR_X^{-\alpha} + \tilde{\rho}\sum_{X\in\tilde{\Pi}_m\backslash\{T_0\}}G_XR_X^{-\alpha}}.
\end{eqnarray}
Using \eqref{Eq:SIR:FreqShare}, the bounds on $\Pout$ for the cellular network are derived as follows. The parallel derivation for the MANET is similar and thus omitted for brevity. For the cellular network, all interferers for $U_0$ (including ad hoc transmitters and other users) can be grouped into a homogeneous \emph{marked PPP} \cite{Kingman93:PoissonProc} defined below, where a mark $P_X\in\{\rho, \tilde{\rho}\}$ is transmission power
\begin{equation}\label{Eq:MarkedPP}
\Upsilon = \l\{(X, P_X) \left| X \in  \Pi_m\cup\tilde{\Pi}_m\backslash\{U_0\}, P_X \in \{\rho, \tilde{\rho}\}\r. \r\}.
\end{equation}
The distribution of $\Upsilon$ is given in the following lemma.
\begin{lemma} \label{Lem:Dist:PPP}The point process $\Upsilon $ is a homogeneous marked PPP with the density $(\lambda + \tilde{\lambda})/M$, where the marks are i.i.d and have the following distribution function
\begin{equation}
P_T = \l\{\begin{aligned}
&P, && \text{w.p.} \ \frac{\lambda}{\lambda + \tilde{\lambda}}\\
&\tilde{P}, && \text{w.p.} \ \frac{\lambda}{\lambda + \tilde{\lambda}}.
\end{aligned}\right.
\end{equation}
\end{lemma}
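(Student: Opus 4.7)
The plan is to build $\Upsilon$ from its two parent processes and invoke three standard properties of Poisson point processes: independent thinning, superposition, and Slivnyak's theorem. I would organize the argument in four short steps so that each mark and each coordinate has a clear probabilistic origin.

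First, I would restrict the cellular user process $\Pi$ (density $\lambda$) to the $m$th sub-channel. Under the frequency-hopping assumption of Section~\ref{Section:ChanModel}, each transmitter independently selects one of the $M$ sub-channels uniformly at random, which is an independent thinning of $\Pi$ with retention probability $1/M$. By the Marking/Colouring Theorem for PPPs, the resulting process $\Pi_m$ is a homogeneous PPP of density $\lambda/M$. The same reasoning applied to $\tilde{\Pi}$ yields that $\tilde{\Pi}_m$ is a homogeneous PPP of density $\tilde{\lambda}/M$, and independence of $\Pi$ and $\tilde{\Pi}$ carries over to independence of $\Pi_m$ and $\tilde{\Pi}_m$.

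Second, I would attach deterministic marks: every point of $\Pi_m$ receives the mark $\rho$ and every point of $\tilde{\Pi}_m$ receives the mark $\tilde{\rho}$. The union of these two independently-marked PPPs is, by the Superposition Theorem, a homogeneous marked PPP on $\mathds{R}^2\times\{\rho,\tilde{\rho}\}$ of total spatial density $(\lambda+\tilde{\lambda})/M$. Third, I would identify the mark distribution at a generic point. Conditioned on a point of the superposition lying at a given location, the probability that it originated from $\Pi_m$ (respectively $\tilde{\Pi}_m$) is the ratio of intensities, namely $\lambda/(\lambda+\tilde{\lambda})$ (respectively $\tilde{\lambda}/(\lambda+\tilde{\lambda})$). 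Because the two parent processes are independent PPPs, these mark assignments are i.i.d.\ across the points of the superposition, yielding exactly the claimed distribution of $P_T$ (correcting the evident typographical substitution of $\lambda$ for $\tilde{\lambda}$ in the second branch).

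Finally, I would deal with the removal of the tagged user $U_0$. The point process in \eqref{Eq:MarkedPP} is the superposition restricted to $(\Pi_m\cup\tilde{\Pi}_m)\setminus\{U_0\}$, i.e.\ the reduced Palm version at the atom $U_0\in\Pi_m$. By Slivnyak's theorem, the reduced Palm distribution of a PPP coincides with its original distribution, so deleting $U_0$ leaves a homogeneous marked PPP with the same intensity and mark law as above. I do not anticipate a serious obstacle here; the only delicate point is making sure Slivnyak is applied to $\Pi_m$ (where $U_0$ actually lives) and not to the superposition, since the tagged atom is cellular by construction and the ad hoc component $\tilde{\Pi}_m$ is untouched by the conditioning.
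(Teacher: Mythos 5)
Your proposal is correct and follows essentially the same route as the paper: superposition of the two thinned sub-channel processes to get a homogeneous PPP of density $(\lambda+\tilde{\lambda})/M$, followed by identifying the origin of a typical point (and hence its power mark) with probability proportional to the parent intensities. The only differences are presentational --- the paper derives the ratio $\lambda/(\lambda+\tilde{\lambda})$ by an explicit shrinking-disk limit rather than citing the colouring/superposition fact directly, and it leaves the thinning and the Slivnyak step for the removal of $U_0$ implicit --- and you are right that the second branch of the stated mark distribution is a typo for $\tilde{\lambda}/(\lambda+\tilde{\lambda})$.
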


\begin{proof} See Appendix~\ref{App:Dist:PPP}. \end{proof}
Using this lemma, the bounds on $\Pout$ are derived and given in the following proposition.
\begin{proposition} \label{Prop:Pout:FreqShare}[Spectrum Underlay] For the coexisting  networks based on spectrum underlay, the outage probabilities are bounded as follows.
\begin{enumerate}
\item {\bf Cellular network}:
\begin{equation}\label{Eq:PoutLb:FreqShare}
\E\[\Pout^l\(W, D, \frac{\lambda+\eta^{-\delta}\tilde{\lambda}}{M}\)\] \leq \Pout(\lambda, \tilde{\lambda}) \leq \E\[\Pout^u\(W, D, \frac{\lambda+\eta^{-\delta}\tilde{\lambda}}{M}\)\]
\end{equation}

\item {\bf MANET}:
\begin{equation}
\E\[\Pout^l\(\widetilde{W}, \tilde{d}, \frac{\eta^{\delta}\lambda+\tilde{\lambda}}{M}\)\] \leq \tPout(\lambda, \tilde{\lambda}) \leq \E\[\Pout^u\(\widetilde{W}, \tilde{d}, \frac{\eta^{\delta}\lambda+\tilde{\lambda}}{M}\)\]
\end{equation}

\end{enumerate}
where $\eta := \rho/\tilde{\rho}$, and $\Pout^l(\cdot,\cdot, \cdot)$ and $\Pout^u(\cdot,\cdot, \cdot)$  are defined in Lemma~\ref{Lem:Pout:FreqSep}.
\end{proposition}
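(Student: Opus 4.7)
The plan is to reduce the spectrum-underlay problem to the single-network overlay bound already established in Lemma~\ref{Lem:Pout:FreqSep}, via a power-to-density scaling applied to the marked PPP provided by Lemma~\ref{Lem:Dist:PPP}. I will spell out the argument for the cellular network; the MANET bound follows by symmetry.

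First, fix the typical sub-channel $m$ accessed by $U_0$. By Lemma~\ref{Lem:Dist:PPP}, the aggregate interferer process $\Upsilon$ on that sub-channel is a marked PPP on $\mathds{R}^2$ of density $(\lambda+\tilde{\lambda})/M$, with i.i.d.\ power marks taking value $\rho$ with probability $\lambda/(\lambda+\tilde{\lambda})$ and $\tilde{\rho}$ otherwise. Split $\Upsilon$ into the two independent homogeneous PPPs obtained by mark type: a cellular-power process of density $\lambda/M$ and an ad-hoc-power process of density $\tilde{\lambda}/M$ (this split is an immediate consequence of the Marking/Coloring Theorem for PPPs).

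Second, absorb the random power into the location. Because $P_XG_XR_X^{-\alpha}=\rho\,G_X\bigl(R_X(\rho/P_X)^{-1/\alpha}\bigr)^{-\alpha}$, the interference contributed by each interferer is identical in distribution to that of a unit-power-$\rho$ interferer placed at the rescaled location $(\rho/P_X)^{1/\alpha}X$. Applying the Mapping Theorem with the radial dilation $X\mapsto \eta^{1/\alpha}X$ to the ad-hoc-power sub-process converts it into a homogeneous PPP of density $(\tilde{\lambda}/M)\eta^{-2/\alpha}=\eta^{-\delta}\tilde{\lambda}/M$ whose points can be treated as having transmission power $\rho$; the cellular-power sub-process is left untouched at density $\lambda/M$. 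By the Superposition Theorem, the combined equivalent interferer process is a homogeneous PPP of density $(\lambda+\eta^{-\delta}\tilde{\lambda})/M$, each of whose points has i.i.d.\ fading $G$ and effective transmit power $\rho$. Crucially, the fading marks are preserved as i.i.d.\ copies of $G$ under this construction, so the resulting configuration is exactly the setting that underlies the derivation of Lemma~\ref{Lem:Pout:FreqSep}.

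Third, apply Lemma~\ref{Lem:Pout:FreqSep} verbatim to the equivalent process. Since the bounds $\Pout^l(w,d,\mu)$ and $\Pout^u(w,d,\mu)$ depend on the interferer configuration only through its density $\mu$ (and the common fading law), substituting $\mu=(\lambda+\eta^{-\delta}\tilde{\lambda})/M$ yields \eqref{Eq:PoutLb:FreqShare}. The MANET bound is obtained by the same argument after swapping the roles of $\rho$ and $\tilde{\rho}$: the normalization power is now $\tilde{\rho}$, so cellular-power interferers get rescaled by $(\tilde{\rho}/\rho)^{1/\alpha}$, producing effective density $\eta^{\delta}\lambda/M$, and superposition gives the total $(\eta^{\delta}\lambda+\tilde{\lambda})/M$. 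The main technical point to get right is the second step: one must verify that the Mapping Theorem may be applied to each mark-defined sub-process separately and that their superposition still carries i.i.d.\ fading, so that Lemma~\ref{Lem:Pout:FreqSep} is literally applicable without reproving its shot-noise tail estimates. Once this equivalence is in place, the proposition is immediate.
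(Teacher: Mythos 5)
Your proof is correct, but it takes a genuinely different route from the paper's. The paper argues by direct computation: it augments the marked PPP of Lemma~\ref{Lem:Dist:PPP} with the fading marks, partitions it into strong and weak interferers exactly as in Section~\ref{Section:Weber}, and then recomputes the three quantities $\E[|\Upsilon_S(w,d)|]$, $\E[I_S^c(w,d)]$ and $\var[I_S^c(w,d)]$ via Campbell's theorem, conditioning on the value of the power mark; each integral (with lower limit $(\eta^{-1}w^{-1}d^{\alpha}\theta g)^{1/\alpha}$ for the ad hoc marks) evaluates to the single-network expression with $\lambda$ replaced by $\lambda+\eta^{-\delta}\tilde{\lambda}$, after which the Chebyshev/void-probability machinery of Section~\ref{Section:Weber} delivers the bounds. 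You instead absorb the power disparity into a spatial dilation and invoke the Mapping Theorem, so that Lemma~\ref{Lem:Pout:FreqSep} applies verbatim to an equivalent single-power PPP of density $(\lambda+\eta^{-\delta}\tilde{\lambda})/M$. Your route is cleaner and makes the origin of the factor $\eta^{-\delta}=\eta^{-2/\alpha}$ transparent (a dilation by $\eta^{1/\alpha}$ scales planar intensity by $\eta^{-2/\alpha}$), and you correctly identify the one point that needs checking, namely that the strong/weak classification and the weak-interferer shot noise are preserved pointwise under the map so that the lemma's tail estimates need not be re-derived. What the paper's computation buys is explicit expressions for $\E[I_S^c]$ and $\var[I_S^c]$ that are reused in later appendices. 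One slip in your write-up: in the displayed identity the rescaling factor should be $(\rho/P_X)^{1/\alpha}$, not $(\rho/P_X)^{-1/\alpha}$; your subsequent sentences use the correct factor $\eta^{1/\alpha}$ for the ad hoc sub-process, so the conclusion is unaffected, but the formula as written does not reduce to $P_XG_XR_X^{-\alpha}$.
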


\begin{proof}See Appendix~\ref{App:Pout:FreqShare}.
\end{proof}
Proposition~\ref{Prop:Pout:FreqShare} shows that the outage probability for each network depends on the transmitter densities of both networks. This coupling is due to spectrum underlay and the resultant mutual interference between the coexisting networks. As shown in Section~\ref{Section:TxCap}, such coupling may result in smaller transmission capacities for spectrum underlay than those for spectrum overlay. Moreover, Proposition~\ref{Prop:Pout:FreqShare} also shows that the outage probabilities for spectrum underlay depend on the transmission power ratio $\eta$. The effect of $\eta$ is also characterized in Section~\ref{Section:TxCap}.

Finally, the probability density function (PDF) of $D$ for an inner-cell user is given in the following lemma, which  is required for computing the bounds on $\Pout$ for different overlay methods. Recall the assumption that the outage probabilities of relay-assisted cell-edge users are no smaller than those of inner-cell users (cf. Section~\ref{Section:Sys:NetwkArch}). Thus, the PDF of $D$ for cell-edge users are unnecessary for our analysis.
\begin{lemma}\label{Lem:PDF:D}
The probability density function (PDF) of $D$ for an inner cell user is given as
\begin{equation}\label{Eq:PDF:D}
f_D(t) = -8\pi\lambda_b t \Ei(-4\pi\lambda_b t^2)
\end{equation}
where the exponential integral $\Ei(x) = \int_{-\infty}^x t^{-1}e^tdt$.
\end{lemma}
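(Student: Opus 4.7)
The plan is to condition on the radius of the inscribed disk around the serving base station, exploit the conditional uniformity of the inner-cell user within this disk, and then reduce the resulting marginal integral to the exponential integral $\Ei$.

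First, by Slivnyak's theorem I would place the serving base station $B_0$ at the origin; the remaining base stations $\Omega\setminus\{B_0\}$ then form an independent homogeneous PPP of density $\lambda_b$. Let $R:=\min_{B\in\Omega\setminus\{B_0\}}|B|$ denote the distance from $B_0$ to its nearest neighbor. The standard void-probability computation gives $\Pr(R>r)=\exp(-\pi\lambda_b r^2)$, so $R$ has density $f_R(r)=2\pi\lambda_b r\exp(-\pi\lambda_b r^2)$ on $r\geq 0$. From definition~\eqref{Eq:InnerDisk}, the inscribed disk $\mathcal{D}_0$ is exactly the ball of radius $R/2$ centered at $B_0$.

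Second, conditional on $R=r$, the inner-cell user $U_0$ lies in the disk of radius $r/2$ around $B_0$, and since it is a point of the user PPP restricted to this disk, its position is uniformly distributed over the disk. Hence the conditional PDF of $D=|U_0-B_0|$ is the radial density of the uniform law on a disk of radius $r/2$, namely $f_{D\mid R}(t\mid r)=2t/(r/2)^2=8t/r^2$ for $0\leq t\leq r/2$. Marginalizing while respecting the support constraint $r\geq 2t$,
\begin{equation*}
f_D(t)=\int_{2t}^\infty \frac{8t}{r^2}\cdot 2\pi\lambda_b r\exp(-\pi\lambda_b r^2)\,dr=16\pi\lambda_b\,t\int_{2t}^\infty \frac{\exp(-\pi\lambda_b r^2)}{r}\,dr.
\end{equation*}
The change of variables $s=\pi\lambda_b r^2$ (so that $ds/s=2\,dr/r$) converts the remaining integral into $\tfrac{1}{2}\int_{4\pi\lambda_b t^2}^\infty s^{-1}e^{-s}\,ds$, which equals $-\tfrac{1}{2}\,\Ei(-4\pi\lambda_b t^2)$ by the definition $\Ei(x)=\int_{-\infty}^x t^{-1}e^t\,dt$ supplied in the lemma. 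Substituting back yields $f_D(t)=-8\pi\lambda_b t\,\Ei(-4\pi\lambda_b t^2)$, matching~\eqref{Eq:PDF:D}.

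The only delicate step is the second one: asserting that, conditional on $R$, the inner-cell user is uniformly placed in the inscribed disk. This rests on the standard PPP property that a point of the process conditioned on lying in a bounded Borel set is uniformly distributed over that set, together with the independence of the base-station and user processes, which justifies treating $R$ as a mixture parameter drawn from $f_R$. The remaining pieces---the void probability for $R$ and the $\Ei$ substitution---are routine.
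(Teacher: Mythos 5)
Your proof is correct and follows essentially the same route as the paper's: condition on the radius of the inscribed disk (you parameterize it via the nearest-neighbor distance $R=2Z$, the paper via the radius $Z$ itself), invoke the void probability $e^{-4\pi\lambda_b z^2}$ and the conditional uniformity of the user on that disk, and reduce the resulting marginal integral to $\Ei$. The only cosmetic difference is that you marginalize the conditional density directly, whereas the paper computes the CDF and then differentiates.
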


\begin{proof}See Appendix~\ref{App:PDF:D}.
\end{proof}
It can be observed from \eqref{Eq:PDF:D} that the key parameter of the PDF of $D$ is the density of base station $\lambda_b$. Intuitively,  increasing the density of base stations reduces the cell sizes and thus $D$ and vice versa.

\subsection{Outage Probabilities: Spectrum Sharing with SIC}\label{Section:Pout:SIC}
The SIRs for the coexisting networks employing SIC are obtained as follows. With SIC, the conditional interferer processes  for the typical user $U$ and ad hoc receiver $T_0$, denoted respectively as $\Sigma(w, d)$ and $\widetilde{\Sigma}(\widetilde{W})$, are defined as
\begin{eqnarray}
    \Sigma(w, d) &:=& \l\{\begin{aligned}
&\l\{X \in \Pi_m\backslash\{U_0\}\l|  G_X R_X^{-\alpha} \leq  \kappa wd^{-\alpha} \r.\r\}, &&\text{spectrum overlay}\\
&\{X \in \Pi_m\cup\tilde{\Pi}_m\backslash\{U_0\}\l |  P_XG_X R_X^{-\alpha} \leq \kappa \rho wd^{-\alpha} \r.\}, && \text{spectrum underlay}
\end{aligned}\r.
\nn\\
\widetilde{\Sigma}(\widetilde{W}) &:=& \l\{\begin{aligned}
&\{X \in \tilde{\Pi}_m\backslash\{T_0\}\mid  G_X R_X^{-\alpha} \leq \kappa \widetilde{W}\tilde{d}^{-\alpha} \}, &&\text{spectrum overlay}\\
&\{X \in \Pi_m\cup\tilde{\Pi}_m\backslash\{T_0\}\mid  P_X G_X R_X^{-\alpha} \leq \kappa \tilde{\rho}\widetilde{W}\tilde{d}^{-\alpha} \}, && \text{spectrum underlay}
\end{aligned}\r.
\nn
\end{eqnarray}
where the factor $\kappa$ determines the power threshold for qualifying interferers for SIC (cf. Section~\ref{Section:Sys:Overlay}).
Using the above definitions, the SIRs for the cellular and the ad hoc networks, denoted respectively as $\SIR$ and $\tSIR$, can be written as
\begin{eqnarray}
\!\!\!\!\!\!\!\!\text{(Spectrum overlay)}\quad \SIR(w,d) &\!\!\!\!=\!\!\!\!& \frac{wd^{-\alpha}}{\sum_{X\in\Sigma^{(3)}}G_XR_X^{-\alpha}}, \quad \tSIR(\widetilde{W}) = \frac{\widetilde{W}\tilde{d}^{-\alpha}}{\sum_{X\in\widetilde{\Sigma}^{(3)}}G_XR_X^{-\alpha}}\label{Eq:SIR:FreqSep:SIC}\\
\!\!\!\!\!\!\!\!\text{(Spectrum underlay)}\quad  \SIR(w,d) &\!\!\!\!=\!\!\!\!& \frac{\rho wd^{-\alpha}}{\sum_{X\in\Sigma}P_XG_XR_X^{-\alpha}}, \quad \tSIR(\widetilde{W}) = \frac{\tilde{\rho}\widetilde{W}\tilde{d}^{-\alpha}}{\sum_{X\in\widetilde{\Sigma}}P_XG_XR_X^{-\alpha}} \label{Eq:SIR:FreqShare:SIC}
\end{eqnarray}
where the distribution of $P_X$ is given in Lemma~\ref{Lem:Dist:PPP}.

The outage probabilities of the SIRs in \eqref{Eq:SIR:FreqSep:SIC} and \eqref{Eq:SIR:FreqShare:SIC} are given in the following proposition.
\begin{proposition}\label{Prop:Pout:SIC}For spectrum sharing with SIC, the bounds on outage probabilities $\Pout$ and $\tPout$ can be modified from their counterparts for the case of no SIC as given in Lemma~\ref{Lem:Pout:FreqSep} and Proposition~\ref{Prop:Pout:FreqShare} by replacing the functions $\tPout^l$ and $\tPout^u$  with $\hPout^l$ and $\hPout^u$ correspondingly, which are given as
\begin{eqnarray}
\hPout^l(w, d, \lambda) &=& 1- \exp\l(- \chi\zeta  \lambda d^2 w^{-\delta}\right)\label{Eq:PoutLb:Exp:a}\\
\hPout^u(w, d, \lambda) &=& 1- \xi(w, d, \lambda) \exp\l(-\chi\zeta \lambda w^{-\delta} d^2 \r)\label{Eq:PoutUb:Exp:a}
\end{eqnarray}
where $\chi := \(1-\theta^{-\delta}\kappa^{-\delta}\)$ and the function $\xi(w, d, \lambda)$ is given in Lemma~\ref{Lem:Pout:FreqSep}.
\end{proposition}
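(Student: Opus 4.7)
The plan is to reuse the analytical template of Section~\ref{Section:Weber} but with the interferer process truncated by SIC. The key observation is that SIC only removes interferers whose received power exceeds $\kappa$ times the signal power; since $\kappa>1$ and $\theta\geq 1$, any interferer classified as \emph{weak} (satisfying $G_X R_X^{-\alpha}\leq \theta^{-1}wd^{-\alpha}$ after signal normalization) automatically survives cancellation. Hence the weak-interferer shot-noise statistics that feed the Chebyshev step in the derivation of $\Pout^u$ are unchanged, and only the expected number of strong interferers in the void-probability step must be recomputed.

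I would first handle spectrum overlay for the cellular link. Define the post-SIC strong interferer process $\widehat{\Sigma}_S(w,d)=\{X\in\Pi_m\backslash\{U_0\}\mid \theta^{-1}wd^{-\alpha}<G_X R_X^{-\alpha}\leq\kappa wd^{-\alpha}\}$, which by the Marking Theorem is itself a Poisson point process. Campbell's theorem, evaluated in polar coordinates with $G$ as an independent mark, gives $\E[|\widehat{\Sigma}_S(w,d)|]=(\lambda/K)\pi w^{-\delta}d^{2}\E[G^{\delta}](\theta^{\delta}-\kappa^{-\delta})=(\lambda/K)\chi\zeta w^{-\delta}d^{2}$, where the truncation contributes the subtracted $\kappa^{-\delta}$ term and thus produces $\chi=1-\theta^{-\delta}\kappa^{-\delta}$. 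The void probability of $\widehat{\Sigma}_S(w,d)$ then yields \eqref{Eq:PoutLb:Exp:a}. For the upper bound, the weak-interferer process and hence the mean and variance of $I_S^c(w,d)$ coincide with the no-SIC case, so the Chebyshev factor $\xi(w,d,\lambda)$ inherited from Lemma~\ref{Lem:Pout:FreqSep} is preserved, and \eqref{Eq:PoutUb:Exp:a} follows after the same replacement $\zeta\mapsto\chi\zeta$ in the exponential.

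For spectrum underlay, I would repeat the argument on the marked PPP $\Upsilon$ of Lemma~\ref{Lem:Dist:PPP}. The SIC condition $P_X G_X R_X^{-\alpha}\leq\kappa\rho wd^{-\alpha}$ now involves the random mark $P_X\in\{\rho,\tilde{\rho}\}$, so the Campbell integral splits into two contributions weighted by $\lambda/(\lambda+\tilde{\lambda})$ and $\tilde{\lambda}/(\lambda+\tilde{\lambda})$. After the same power-scaling change of variable used in the proof of Proposition~\ref{Prop:Pout:FreqShare} (absorbing the ratio $\eta=\rho/\tilde{\rho}$ into an effective density), the two contributions recombine into the effective density $\lambda+\eta^{-\delta}\tilde{\lambda}$, and the truncation at $\kappa$ again factors out the same $\chi$. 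The parallel statements for the MANET receiver $T_0$ follow by symmetry, swapping the roles of the two networks and using $\eta^{\delta}\lambda+\tilde{\lambda}$ instead.

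The main subtlety is bookkeeping rather than any new analytical idea: I must verify that the post-SIC strong and weak interferer processes remain disjoint and independent, so that the product decomposition \eqref{Eq:Pout:a} continues to apply, and that in the underlay case the mixed-power thinning truly produces the same $\chi$ factor as in overlay. Once the Campbell computation for $\E[|\widehat{\Sigma}_S|]$ is in hand, the rest of the argument is a mechanical substitution $\zeta\mapsto\chi\zeta$ in the exponentials of Lemma~\ref{Lem:Pout:FreqSep} and Proposition~\ref{Prop:Pout:FreqShare}, with $\xi(\cdot,\cdot,\cdot)$ left intact.
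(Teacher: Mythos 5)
Your proposal is correct and follows essentially the same route as the paper's own proof: the same three-way decomposition into canceled, surviving-strong (an annulus of received power between $\theta^{-1}wd^{-\alpha}$ and $\kappa wd^{-\alpha}$), and weak interferers, the same Campbell computation producing $\chi = 1-\theta^{-\delta}\kappa^{-\delta}$, and the same observation that the weak-interferer process, and hence $\xi$, is untouched by SIC, with the underlay case handled identically via the marked PPP of Lemma~\ref{Lem:Dist:PPP}.
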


\begin{proof}
See Appendix~\ref{App:PoutBnds:SIC}.
\end{proof}
Note that \eqref{Eq:PoutLb:Exp:a} and \eqref{Eq:PoutUb:Exp:a} differ from respectively \eqref{Eq:PoutLb:Exp} and \eqref{Eq:PoutUb:Exp} only by the factor $\chi$.
The factor $\chi < 1$ represents the SIC advantage of reducing outage probabilities with respect to the case of no SIC ($\chi = 1$). Moreover, decreasing the SIC factor $\kappa$ reduces $\chi$ and thus outage probabilities. Nevertheless, $\kappa$ being too small may invalidate the assumption of perfect SIC. Specifically, small $\kappa$ implies small SIR for the process of decoding interference prior to its cancelation and potentially results in significant residual interference after SIC \cite{Andrews:InterfCancelCellularSys:ContempView:2005}.

\section{Network Capacity Trade-Off: Asymptotic Analysis}\label{Section:TxCap}
Using the results obtained in the preceding section, the trade-off between the transmission capacities of the coexisting networks, namely $C$ and $\tilde{C}$ as defined in \eqref{Eq:TxCap}, is characterized in the following theorem for small target outage probability $\epsilon \rightarrow 0$.

\begin{theorem}\label{Theo:Throughput}
For $\epsilon \rightarrow 0$, transmission capacities of the coexisting networks satisfy
\begin{equation}\label{Eq:Throughput}
\tilde{\mu}\tilde{C} + \mu C = \frac{M}{\phi}\epsilon + O\(\epsilon^2\)
\end{equation}
where  the weights $\mu$ and $\tilde{\mu}$ are given as\footnote{The subscripts $o$ and $u$ identify spectrum overlay and underlay, respectively}
\begin{equation}\label{Eq:CapTO:Wei}
\l\{
\begin{aligned}
&\tilde{\mu}_o =\zeta\E[\widetilde{W}^{-\delta}]\tilde{d}^2, &&\mu_o =\zeta\E[W^{-\delta}](8\pi\lambda_b)^{-1},&& \text{spectrum overlay}\\
&\tilde{\mu}_u = \tilde{\mu}_o\vee(\eta^{-\delta}\mu_o), & &\mu_u =(\eta^{\delta}\tilde{\mu}_o)\vee\mu_o,&& \text{spectrum underlay}
\end{aligned}\r.
\end{equation}
and $\phi$ depends on if SIC is used
\begin{equation}\label{Eq:CapTO:SIC}
\l\{
\begin{aligned}
&\phi =1,&& \text{no SIC}\\
&1- \theta^{-\delta}\kappa^{-\delta} \leq \phi \leq \tfrac{2}{2-\delta}- \theta^{-\delta}\kappa^{-\delta},&& \text{SIC}.
\end{aligned}\r.
\end{equation}
\end{theorem}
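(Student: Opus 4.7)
The plan is to Taylor-expand the outage-probability bounds in Lemma~\ref{Lem:Pout:FreqSep}, Proposition~\ref{Prop:Pout:FreqShare}, and Proposition~\ref{Prop:Pout:SIC} in the small-density regime, invert them to obtain the critical densities at which the outage probability equals $\epsilon$, and then weight and sum the resulting first-order equalities to isolate the linear trade-off \eqref{Eq:Throughput}.

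I would first establish a master small-density expansion of the bounds. Using $1 - e^{-x} = x + O(x^2)$ together with the expansion of the Chebyshev factor $\xi(w,d,\lambda) = 1 - \tfrac{\delta}{2-\delta}\zeta\lambda w^{-\delta}d^2 + O(\lambda^2)$, the SIC bounds in Proposition~\ref{Prop:Pout:SIC} reduce to
\begin{eqnarray}
\hPout^l(w,d,\lambda) &=& \chi\,\zeta\lambda w^{-\delta}d^2 + O(\lambda^2),\nn\\
\hPout^u(w,d,\lambda) &=& \l(\tfrac{2}{2-\delta} - \theta^{-\delta}\kappa^{-\delta}\r)\zeta\lambda w^{-\delta}d^2 + O(\lambda^2),\nn
\end{eqnarray}
where the upper-bound coefficient follows from $1 + \delta/(2-\delta) = 2/(2-\delta)$, and the no-SIC bounds of Lemma~\ref{Lem:Pout:FreqSep} are recovered with $\chi=1$. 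Taking expectation, using the independence of the fading from the geometry, and invoking Lemma~\ref{Lem:PDF:D} together with the identity $\int_0^\infty u\,\Ei(-u)\,du = -\tfrac12$ (which I would verify by exchanging the order of integration in $\Ei(-u) = -\int_u^\infty t^{-1}e^{-t}\,dt$), I obtain $\E[D^2]=(8\pi\lambda_b)^{-1}$, so that
$$\E[\hPout(W,D,\lambda)] = \phi\,\zeta\,\E[W^{-\delta}]\,(8\pi\lambda_b)^{-1}\,\lambda + O(\lambda^2),$$
with $\phi$ in the interval of \eqref{Eq:CapTO:SIC}; the no-SIC value $\phi=1$ then follows by picking the lower bound.

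For spectrum overlay I would substitute $\lambda/K$ and $\tilde\lambda/\widetilde{K}$ into this expansion, enforce $\Pout(\lambda_\epsilon) = \tPout(\tilde\lambda_\epsilon) = \epsilon$, and invert to obtain $\lambda_\epsilon = K\epsilon/(\phi\mu_o) + O(\epsilon^2)$ and the analogous $\tilde\lambda_\epsilon = \widetilde{K}\epsilon/(\phi\tilde\mu_o) + O(\epsilon^2)$. Multiplying by the weights of \eqref{Eq:CapTO:Wei} and summing, and using $K + \widetilde{K} = M$, yields \eqref{Eq:Throughput}. For spectrum underlay, the same expansion applied to Proposition~\ref{Prop:Pout:FreqShare} produces the two simultaneous constraints
$$\mu_o\lambda_\epsilon + \eta^{-\delta}\mu_o\tilde\lambda_\epsilon = M\epsilon/\phi + O(\epsilon^2),\qquad \eta^{\delta}\tilde\mu_o\lambda_\epsilon + \tilde\mu_o\tilde\lambda_\epsilon = M\epsilon/\phi + O(\epsilon^2),$$
both of which must hold since each network's outage probability is individually tight. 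A short sign check comparing $(\mu_o - \eta^\delta\tilde\mu_o)\lambda_\epsilon$ with $(\tilde\mu_o - \eta^{-\delta}\mu_o)\tilde\lambda_\epsilon$ shows that whichever of the inequalities $\mu_o \gtrless \eta^\delta\tilde\mu_o$ holds, the corresponding constraint dominates the other pointwise in the first quadrant, so the binding constraint is exactly $\mu_u\lambda_\epsilon + \tilde\mu_u\tilde\lambda_\epsilon = M\epsilon/\phi + O(\epsilon^2)$ with $\mu_u, \tilde\mu_u$ given by the $\vee$ structure of \eqref{Eq:CapTO:Wei}.

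The main obstacle I expect is the underlay identification: one must justify that the componentwise maximum of the two coefficient pairs genuinely captures the binding constraint in both sub-cases $\mu_o \gtrless \eta^\delta\tilde\mu_o$, and propagate the $O(\epsilon^2)$ remainders cleanly through the simultaneous inversion of the two coupled equations. The remaining ingredients --- the Taylor expansions of the bounds, the $\Ei$ integral giving $\E[D^2]$, and the bookkeeping of the SIC factor $\chi$ through the bounds --- are routine once the master expansion is in place.
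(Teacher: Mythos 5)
Your overall route --- expand the outage bounds to first order in the densities, invert at $\Pout=\epsilon$ and $\tPout=\epsilon$, and identify the binding constraint for underlay --- is the same as the paper's, and your alternate computation of $\E[D^2]=(8\pi\lambda_b)^{-1}$ from Lemma~\ref{Lem:PDF:D} via $\int_0^\infty u\,\Ei(-u)\,du=-\tfrac12$ is correct (the paper instead computes $\E[Z^2/2]$ directly from \eqref{Eq:PDF:Z}). But there is one genuine gap: your ``master expansion,'' built solely from $1-e^{-x}=x+O(x^2)$ and the Chebyshev factor $\xi$, can only sandwich $\Pout$ between the expansions of the lower and upper bounds; in the no-SIC case that yields $\phi\in[1,\tfrac{2}{2-\delta}]$, not $\phi=1$. ``Picking the lower bound'' is not a justification. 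The exact value $\phi=1$ without SIC requires knowing that the lower bound --- the probability of at least one dominant interferer --- is asymptotically \emph{tight} to first order in the density; the paper gets this from the series representation of the power-law shot-noise distribution \cite{Lowen:PowerLawShotNoise:1990} via \cite[Theorem~2]{WeberAndrews:TransCapAdHocNetwkDistSch:2006}, and it is precisely because that series is unavailable once SIC trims the upper tail of the shot noise that the SIC row of \eqref{Eq:CapTO:SIC} is left as an interval. You need to import (or reprove) that tightness result for the no-SIC rows; the bounds alone do not give it.

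A smaller point on underlay: you first assert that both first-order constraints hold with equality ``since each network's outage probability is individually tight,'' which is impossible unless $\mu_o=\eta^{\delta}\tilde{\mu}_o$ --- they are inequality constraints $\Pout\leq\epsilon$ and $\tPout\leq\epsilon$, and the paper explicitly notes that generically only one network is outage limited. Your subsequent dominance argument is the right one and is in fact cleaner than your ``sign check'' suggests: both constraints bound the \emph{same} linear combination of densities, since $(\eta^{\delta}\lambda+\tilde{\lambda})\tilde{\mu}_o=(\lambda+\eta^{-\delta}\tilde{\lambda})\,\eta^{\delta}\tilde{\mu}_o$, so the pair collapses to $(\lambda+\eta^{-\delta}\tilde{\lambda})\max\(\mu_o,\eta^{\delta}\tilde{\mu}_o\)\leq M\epsilon/\phi+O(\epsilon^2)$, which is exactly $\mu_u\lambda+\tilde{\mu}_u\tilde{\lambda}\leq M\epsilon/\phi+O(\epsilon^2)$ because $\tilde{\mu}_u=\eta^{-\delta}\mu_u$ by \eqref{Eq:CapTO:Wei}. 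Drop the equality claim and state the dominance directly; the rest of your overlay and SIC bookkeeping matches the paper.
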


\begin{proof}See Appendix~\ref{App:PoutBnds:SIC}.
\end{proof}
Theorem~\ref{Theo:Throughput} shows that the trade-off between $C$ and $\tilde{C}$ follows a linear equation. Specifically, the slope at which $\tilde{C}$ increases with decreasing $C$ is $-\mu/\tilde{\mu}$, which depends on different network parameters as observed from \eqref{Eq:CapTO:Wei}. The results in Theorem~\ref{Theo:Throughput} are interpreted using several corollaries in the sequel.

To facilitate discussion, define an \emph{outage limited} network as one whose transmission capacity is achieved with the outage constraint being active. For instance, the cellular network is outage limited if $C = (1-\epsilon)\lambda_\epsilon$ with  $\Pout(\lambda_\epsilon) = \epsilon$. For spectrum overlay, both the coexisting networks are outage limited. Nevertheless, for spectrum underlay, it is likely that only one of the two networks is outage limited as explained shortly. As implied by the proof for Theorem~\ref{Theo:Throughput}, for spectrum underlay, both networks are outage limited only if $\mu_u = \tilde{\mu}_u$, where $\mu_u$ and $\tilde{\mu}_u$ are given in \eqref{Eq:CapTO:Wei}. Otherwise, $\mu_u > \tilde{\mu}_u$ correspond to only the cellular network being outage limited; $\mu_u < \tilde{\mu}_u$ indicates that only the MANET is outage limited.

Spectrum overlay is shown to be more efficient than spectrum underlay as follows. Define the \emph{capacity region} of the coexisting networks as the region enclosed by the capacity trade-off curve in \eqref{Eq:Throughput} and the positive  axes of the $C$-$\tilde{C}$ coordinates. This region contains all feasible combinations of transmission capacities of coexisting networks. Thus, the size of the capacity region measures the efficiency of the overlaid network. The capacity regions for spectrum overlay and underlay are compared in the following corollary.
\begin{corollary}\label{Cor:CapRegion}
For $\epsilon\rightarrow 0$, the capacity region for spectrum underlay is no larger than that for spectrum overlay. They are identical  if and only if the transmission-power ratio is chosen as
\begin{equation}\label{Eq:PowerRatio:Optim}
\eta = \(\frac{\mu_o}{\tilde{\mu}_o}\)^{\frac{1}{\delta}}
\end{equation}
where $\mu_o$ and $\tilde{\mu}_o$ are given in \eqref{Eq:CapTO:Wei}.
\end{corollary}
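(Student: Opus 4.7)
The plan is to derive the corollary almost entirely from the linear capacity trade-off equation \eqref{Eq:Throughput} of Theorem~\ref{Theo:Throughput}, by comparing the weight coefficients $(\mu,\tilde{\mu})$ between the overlay and underlay cases. Since the factor $\phi$ on the right-hand side of \eqref{Eq:Throughput} is identical for both spectrum-sharing methods when SIC is either used or not used, the comparison reduces to a componentwise comparison of the two pairs $(\mu_o,\tilde{\mu}_o)$ and $(\mu_u,\tilde{\mu}_u)$ in \eqref{Eq:CapTO:Wei}.

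First, I would write out the capacity region explicitly. By Theorem~\ref{Theo:Throughput}, for $\epsilon\to 0$ the capacity region is (up to $O(\epsilon^2)$) the triangle
\[
\mathcal{R}(\mu,\tilde{\mu}) = \bigl\{(C,\tilde{C})\in\mathbb{R}_+^2 : \mu C + \tilde{\mu}\tilde{C} \le \tfrac{M}{\phi}\epsilon\bigr\}.
\]
Since $\mathcal{R}$ is monotonically non-increasing in each weight (larger weights produce a smaller triangle), showing that $\mu_u\ge\mu_o$ and $\tilde{\mu}_u\ge\tilde{\mu}_o$ will immediately give $\mathcal{R}(\mu_u,\tilde{\mu}_u)\subseteq\mathcal{R}(\mu_o,\tilde{\mu}_o)$.

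Next, I would verify these two inequalities directly from \eqref{Eq:CapTO:Wei}. By definition,
\[
\tilde{\mu}_u = \tilde{\mu}_o \vee (\eta^{-\delta}\mu_o) \ge \tilde{\mu}_o, \qquad
\mu_u = (\eta^{\delta}\tilde{\mu}_o)\vee \mu_o \ge \mu_o,
\]
both holding for every admissible $\eta>0$. This establishes the first (containment) part of the corollary.

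Finally, for the equality case I would impose $\tilde{\mu}_u=\tilde{\mu}_o$ and $\mu_u=\mu_o$ simultaneously. The first equation gives $\eta^{-\delta}\mu_o\le\tilde{\mu}_o$, i.e.\ $\eta^{\delta}\ge \mu_o/\tilde{\mu}_o$; the second gives $\eta^{\delta}\tilde{\mu}_o\le \mu_o$, i.e.\ $\eta^{\delta}\le \mu_o/\tilde{\mu}_o$. Together these pin down $\eta^{\delta}= \mu_o/\tilde{\mu}_o$, which is exactly \eqref{Eq:PowerRatio:Optim} after taking the $\delta$-th root. Conversely, substituting this $\eta$ back into \eqref{Eq:CapTO:Wei} yields $\mu_u=\mu_o$ and $\tilde{\mu}_u=\tilde{\mu}_o$, so the two capacity regions coincide. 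The only subtlety to watch is that Theorem~\ref{Theo:Throughput} is an asymptotic statement with $O(\epsilon^2)$ error, so the containment and equality should be interpreted in the same asymptotic sense; no real obstacle arises since the argument is a direct comparison of coefficients, and there is no need to revisit the outage-probability bounds used in the theorem's proof.
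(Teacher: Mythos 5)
Your proposal is correct and follows essentially the same route as the paper: both arguments reduce the region comparison to the componentwise inequalities $\mu_u\geq\mu_o$ and $\tilde{\mu}_u\geq\tilde{\mu}_o$ read off from \eqref{Eq:CapTO:Wei}, and then check the equality case via the choice of $\eta$ in \eqref{Eq:PowerRatio:Optim}. If anything, you are slightly more thorough than the paper's proof, which only verifies the ``if'' direction by substitution, whereas you also derive the ``only if'' direction by showing the two max-conditions force $\eta^{\delta}=\mu_o/\tilde{\mu}_o$.
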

\begin{proof}See Appendix~\ref{App:CapRegion}.
\end{proof}
Corollary~\ref{Cor:CapRegion} shows that spectrum overlay is potentially more efficient than spectrum underlay due to network coupling for the latter. Specifically, the possibility that a network is not outage limited compromises the efficiency of spectrum underlay, which, however, can be compensated by setting $\eta$ as given in \eqref{Eq:PowerRatio:Optim}. This optimal value of $\eta$ ensures both networks are outage limited for the case of spectrum underlay.

The next corollary specifies the effects of several parameters on transmission capacities  of the coexisting networks.
\begin{corollary} For $\epsilon \rightarrow 0$, transmission capacities  vary with network parameters as follows.
\begin{enumerate}
\item {\bf Spectrum overlay}: $C$ increases \emph{linearly} with the base station density $\lambda_b$; $\tilde{C}$ increases \emph{inversely} with the ad hoc     transmitter-receiver distance  $\tilde{d}$.

\item {\bf Spectrum underlay}: If the cellular network is outage limited, both $C$ and $\tilde{C}$ increase \emph{linearly} with the base station density $\lambda_b$. Otherwise, both $C$ and $\tilde{C}$ increase \emph{inversely} with the ad hoc     transmitter-receiver distance  $\tilde{d}$.

\item For both spectrum sharing methods, $C$ and $\tilde{C}$ increase \emph{linearly} with $\epsilon$ and the number of sub-channels $M$, and \emph{inversely} with $\phi$ related to SIC.
\end{enumerate}
\end{corollary}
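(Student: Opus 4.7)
The plan is to read off all three statements directly from the linear capacity trade-off equation of Theorem~\ref{Theo:Throughput}, using the explicit expressions for the weights $\mu$ and $\tilde{\mu}$ in \eqref{Eq:CapTO:Wei} together with the characterization (stated just after Theorem~\ref{Theo:Throughput}) of which network is outage-limited in terms of the relative sizes of $\mu_u$ and $\tilde{\mu}_u$. Concretely, I would start by noting that the capacity region is the triangle bounded by the axes and the line $\tilde{\mu}\tilde{C}+\mu C=\frac{M}{\phi}\epsilon+O(\epsilon^2)$; all scaling behaviour is therefore read off from the axis intercepts $C_{\max}=\frac{M\epsilon}{\mu\phi}+O(\epsilon^2)$ and $\tilde{C}_{\max}=\frac{M\epsilon}{\tilde{\mu}\phi}+O(\epsilon^2)$, since any point on the boundary scales with $\lambda_b$, $\tilde{d}$, $M$, $\epsilon$, or $\phi$ in the same way as the corresponding intercept.

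Part~3 is then immediate: the right-hand side of \eqref{Eq:Throughput} is linear in $M$ and $\epsilon$ and inversely proportional to $\phi$, while none of the weights in \eqref{Eq:CapTO:Wei} depends on $M$, $\epsilon$, or $\phi$, so both $C$ and $\tilde{C}$ inherit these linear and inverse dependences regardless of the sharing method or whether SIC is employed. For Part~1, I would substitute $\mu_o=\zeta\E[W^{-\delta}](8\pi\lambda_b)^{-1}$ into $C_{\max}$, from which $C\propto\lambda_b$ falls out; similarly, $\tilde{\mu}_o=\zeta\E[\widetilde{W}^{-\delta}]\tilde d^{2}$ gives $\tilde{C}\propto 1/\tilde d^{\,2}$, which is the stated inverse dependence on $\tilde{d}$ (and, crucially, each weight depends on only one of the two parameters, so the cross-coupling is absent).

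Part~2 requires the extra step of identifying, for spectrum underlay, which branch of the maxima in \eqref{Eq:CapTO:Wei} is active. From the discussion after Theorem~\ref{Theo:Throughput}, the cellular network is outage-limited precisely when $\mu_u>\tilde{\mu}_u$, which, unwinding the maxima, is equivalent to $\mu_o>\eta^{\delta}\tilde{\mu}_o$. In that regime I would substitute $\mu_u=\mu_o$ and $\tilde{\mu}_u=\eta^{-\delta}\mu_o$ into the intercepts; both are proportional to $1/\lambda_b$ and independent of $\tilde{d}$, which yields the claimed common linear dependence of $C$ and $\tilde{C}$ on $\lambda_b$. In the complementary regime $\mu_o<\eta^{\delta}\tilde{\mu}_o$ (MANET outage-limited), substitution gives $\mu_u=\eta^{\delta}\tilde{\mu}_o$ and $\tilde{\mu}_u=\tilde{\mu}_o$, both proportional to $\tilde d^{\,2}$ and independent of $\lambda_b$, so both capacities scale inversely with $\tilde d^{\,2}$, as stated.

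The computations are essentially substitutions, so there is no genuinely hard step; the only subtlety, and the place where a careless proof would go wrong, is Part~2: one must not forget to use the max-structure of $\mu_u,\tilde{\mu}_u$ to determine which parameter governs the dominant constraint, and in particular to observe that in each of the two regimes one of the two parameters ($\lambda_b$ or $\tilde d$) entirely drops out of both weights, which is why the same scaling law applies simultaneously to $C$ and $\tilde{C}$.
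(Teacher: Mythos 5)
Your proposal is correct and takes essentially the same route as the paper, which offers no separate proof of this corollary: it is meant to be read directly off Theorem~\ref{Theo:Throughput}, the weights in \eqref{Eq:CapTO:Wei}, and the outage-limited discussion, exactly as you do. One small caution: since $\mu_u=\eta^{\delta}\tilde{\mu}_u$ identically, the condition ``$\mu_u>\tilde{\mu}_u$'' (which you inherit from the paper's own text) is not literally equivalent to $\mu_o>\eta^{\delta}\tilde{\mu}_o$; the operative criterion is which branch of the maxima in \eqref{Eq:CapTO:Wei} is active, which is the condition you in fact use for the substitutions, so your conclusions stand.
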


Finally, we analyze the transmission-capacity gains due to spatial \emph{diversity gains} contributed by multi-antennas \cite{GoldsmithBook:WirelessComm:05}. To obtain concrete results, the fading factors $W$ and $\widetilde{W}$ are assumed to follow the chi-squared distributions with the degrees of freedom $L$ and $\tilde{L}$ respectively, which are the \emph{diversity gains}. These fading distributions can result from using spatial diversity techniques such as beamforming over multi-antenna i.i.d. Rayleigh fading channels \cite{GoldsmithBook:WirelessComm:05, PaulrajBook}. Thus
\begin{equation}\label{Eq:Gamma}
\E[W^{-\delta}] = \frac{\Gamma(L - \delta)}{\Gamma(L)},\quad \E[\widetilde{W}^{-\delta}] = \frac{\Gamma(\tilde{L} - \delta)}{\Gamma(\tilde{L})}.
\end{equation}
The following corollary is obtained by combining Theorem~\ref{Theo:Throughput}, \eqref{Eq:Gamma} and the following Kershaw's Inequalities \cite{Kershaw:GautschiInequGammaFun:1983}
\begin{equation}\label{Eq:Kershaw}
\(x + \frac{s}{2}\)^{1-s} < \frac{\Gamma(x + s)}{\Gamma(x + 1)} < \[x-\frac{1}{2} + \(s+\frac{1}{4}\)^{\frac{1}{2}}\]^{1-s},\quad x\geq 1,\ 0 < s < 1.
\end{equation}

\begin{corollary}[Spatial Diversity Gain] Consider the diversity gains per link of $L$ and $\tilde{L}$ for the coexisting cellular and ad hoc networks, respectively.
\begin{enumerate}
\item {\bf Spectrum overlay}: The spatial diversity gains multiply $C$ by a factor between $(L-1)^\delta$ and $L^\delta$,  and $\tilde{C}$ by a factor between $(\tilde{L}-1)^\delta$ and $\tilde{L}^\delta$.

\item {\bf Spectrum underlay}: The spatial diversity gains multiply  both $C$ and $\tilde{C}$ by a factor between $(L-1)^\delta$ and $L^\delta$ if the cellular network is outage limited, or otherwise between $(\tilde{L}-1)^\delta$ and $\tilde{L}^\delta$.
\end{enumerate}
\end{corollary}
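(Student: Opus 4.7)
The plan is to reduce the corollary to a single Gamma-function squeeze fed into Theorem~\ref{Theo:Throughput}. The diversity gains $L$ and $\tilde{L}$ enter the trade-off $\tilde{\mu}\tilde{C}+\mu C = M\epsilon/\phi + O(\epsilon^2)$ \emph{only} through the fading moments $\E[W^{-\delta}]$ and $\E[\widetilde{W}^{-\delta}]$, which appear inside $\mu$ and $\tilde{\mu}$ according to \eqref{Eq:CapTO:Wei}. Under the chi-squared assumption \eqref{Eq:Gamma}, these moments equal $\Gamma(L-\delta)/\Gamma(L)$ and $\Gamma(\tilde{L}-\delta)/\Gamma(\tilde{L})$ respectively. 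Since each capacity is inversely proportional to its weight, the diversity-dependent multiplicative factor appearing in $C$ (respectively $\tilde{C}$) is the reciprocal Gamma ratio $\Gamma(L)/\Gamma(L-\delta)$ (respectively $\Gamma(\tilde{L})/\Gamma(\tilde{L}-\delta)$). Thus the whole statement reduces to proving
\begin{equation*}
(L-1)^{\delta}\;\leq\;\frac{\Gamma(L)}{\Gamma(L-\delta)}\;\leq\;L^{\delta},
\end{equation*}
together with its $\tilde{L}$ analogue.

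The Gamma bound is obtained by applying Kershaw's inequality~\eqref{Eq:Kershaw} with $x=L-1$ and $s=1-\delta$, which is admissible because $L\geq 2$ and $\delta=2/\alpha\in(0,1)$. This yields a sandwich of the form
\begin{equation*}
\bigl(L-1+\tfrac{1-\delta}{2}\bigr)^{\delta}\;\leq\;\frac{\Gamma(L)}{\Gamma(L-\delta)}\;\leq\;\bigl(L-\tfrac{3}{2}+\sqrt{\tfrac{5}{4}-\delta}\bigr)^{\delta}.
\end{equation*}
The lower bound clearly exceeds $(L-1)^{\delta}$ since $(1-\delta)/2>0$; the upper bound is at most $L^{\delta}$ since $\sqrt{5/4-\delta}\leq 3/2$ for all $\delta\in(0,1)$. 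The argument for $\tilde{L}$ is identical and delivers $\Gamma(\tilde{L})/\Gamma(\tilde{L}-\delta)\in[(\tilde{L}-1)^{\delta},\tilde{L}^{\delta}]$.

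What remains is to translate this monomial sandwich into the two cases of the corollary. For spectrum overlay, \eqref{Eq:CapTO:Wei} decouples $\mu_o$ and $\tilde{\mu}_o$: only $\mu_o$ carries the $L$-dependence and only $\tilde{\mu}_o$ carries the $\tilde{L}$-dependence, so the asserted bounds on $C$ and $\tilde{C}$ follow directly. For spectrum underlay I would invoke the outage-limited dichotomy recalled immediately after Corollary~\ref{Cor:CapRegion}: when the cellular network is outage limited one has $\mu_u=\mu_o$ and $\tilde{\mu}_u=\eta^{-\delta}\mu_o$, so \emph{both} weights are proportional to $\E[W^{-\delta}]$ and both capacities consequently scale with $\Gamma(L)/\Gamma(L-\delta)\in[(L-1)^{\delta},L^{\delta}]$; the symmetric case $\mu_u=\eta^{\delta}\tilde{\mu}_o$, $\tilde{\mu}_u=\tilde{\mu}_o$ produces the $\tilde{L}$ bounds.

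The main obstacle I anticipate is the bookkeeping in the underlay step---confirming from the $\max(\cdot,\cdot)$ structure of $\mu_u$ and $\tilde{\mu}_u$ in \eqref{Eq:CapTO:Wei} that both weights track the \emph{same} diversity gain, with $\eta^{\pm\delta}$ acting merely as an $L$-independent multiplier---rather than the Kershaw estimate itself, which is a one-line squeeze once the correct substitution has been made.
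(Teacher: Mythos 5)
Your proposal is correct and follows essentially the same route as the paper, which obtains the corollary precisely by combining Theorem~\ref{Theo:Throughput}, \eqref{Eq:Gamma}, and Kershaw's inequality with the substitution $x=L-1$, $s=1-\delta$; your squeeze $(L-1)^{\delta}\leq\Gamma(L)/\Gamma(L-\delta)\leq L^{\delta}$ and the overlay/underlay bookkeeping via \eqref{Eq:CapTO:Wei} are exactly the omitted details. The only remark worth adding is that \eqref{Eq:Kershaw} as printed in the paper has the Gamma ratio inverted (it is inconsistent with its own monomial bounds as $x\to\infty$); you have silently applied the correct orientation $\Gamma(x+1)/\Gamma(x+s)$, which is the form needed for the corollary.
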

Note that similar results are obtained in \cite{AndrewJeff:CapacityScalingSpatialDiversity:2006} for a standing-alone MANET by using a more complicated method than the current one based on Kershaw's Inequalities.

\section{Simulation and Numerical Results}\label{Section:Numerical}
In this section, the tightness of the bounds on outage probabilities derived in Section~\ref{Section:Outage} is evaluated using simulation. Moreover, the asymptotic transmission capacity trade-off curves obtained in Theorem~\ref{Theo:Throughput} are compared with the non-asymptotic ones generated by simulation. The simulation procedure summarized below is similar to that in \cite{WeberKam:CompComplexMANETs:2006}. The typical base station (or the ad hoc receiver) of the coexisting network lies at the centers of two overlapping disks, which contain interfering transmitters (either ad hoc nodes, users or both) and base stations respectively. Both the transmitters and the base stations follow the Poisson distribution with the mean equal to $200$. The disk radiuses are adjusted to provide the desired densities of transmitters or base stations. For simulations, the distance between the typical ad hoc transmitter and receiver is $d=5$ m, the required SIR  $\theta = 3$ or $4.8$ dB, the path-loss exponent $\alpha = 4$, the base station density  $\lambda_b = 10^{-3}$, the SIC factor $\kappa = 2$ dB, and the transmission-power ratio $\eta = 5$ dB.

Fig.~\ref{Fig:PoutCmp} compares the bounds on outage probabilities in Section~\ref{Section:Outage} and the simulated values. As observed from Fig.~\ref{Fig:PoutCmp}, for all cases, the outage probabilities converge to their lower bounds as the transmitter densities decrease; the upper and lower bounds differ by approximately constant multiplicative factors. Fig.~\ref{Fig:PoutCmp} also shows that SIC reduces outage probabilities by a factor of about  $0.54$ approximately equal to $\chi$ in Proposition~\ref{Prop:Pout:SIC}. Moreover, SIC loosens the bounds on outage probabilities for relatively large transmitter densities since SIC reduces the number of strong interferers to each receiver. Finally, outage probabilities become proportional to transmitter densities as they  decrease.

Fig.~\ref{Fig:TxCap} compares the asymptotic transmission-capacity trade-off curves in Theorem~\ref{Theo:Throughput} and those generated by simulations for the target outage probability $\epsilon = 10^{-2}$.  In Fig.~\ref{Fig:TxCap}(b) for the case of SIC, the bounds on the asymptotic trade-off curves correspond to those on $\phi$ as given Theorem~\ref{Theo:Throughput}. By comparing Fig.~\ref{Fig:TxCap}(a)  and Fig.~\ref{Fig:TxCap}(b), the capacity regions for spectrum overlay are larger than those for spectrum underlay. For the case of no SIC, the asymptotic results closely match their simulated counterparts. When SIC is used, the capacity trade-off curves generated by simulation are close to the corresponding asymptotic upper bounds. In particular, for spectrum overlay with SIC, the simulation results are practically identical to their asymptotic upper bounds. In summary, the asymptotic results derived in Section~\ref{Section:TxCap} are  useful for characterizing the transmission capacities of the coexisting networks in the non-asymptotic regime.

\section{Conclusion}\label{Section:Conlusion}
In this paper,  the transmission-capacity trade-off between the coexisting cellular and ad hoc networks is analyzed for different spectrum sharing methods. To this end, bounds on outage probabilities for both networks are derived for spectrum overlay and underlay with and without SIC. For small target outage probability, the transmission capacities of the coexisting networks are shown to satisfy a linear equations, whose coefficients are derived for the cases considered above. These results provide a theoretical basis for adapting the node density of the ad hoc network to the dynamic of the traffic in cellular uplink under the outage constraint for both networks. The trade-off relationship suggests that transmission capacities of coexisting networks can be increased by adjusting various parameters such as decreasing the distances between intended ad hoc transmitters and receivers, increasing the base station density and link diversity gains, or by employing SIC. In particular, SIC increases the transmission capacities by a linear factor that depends on the interference power threshold for qualifying canceled interferers. Simulation results show that the derived bounds on outage probabilities are tight and the asymptotic liner capacity trade-off is valid even in the non-asymptotic regime.

This paper opens several issues for future work on spectrum sharing between networks including the impact of cognitive radio, the capacity trade-off between competing networks, and the extension to more realistic non-homogeneous network architectures.

\appendix
\subsection{Proof for Lemma~\ref{Lem:Dist:PPP}}\label{App:Dist:PPP}
By using the superposition property of Poisson processes, the combined PPP $\Pi_m\cup\tilde{\Pi}_m$ is also a homogeneous PPP with the density $\frac{\lambda+\tilde{\lambda}}{M}$. Consider a typical point $X\in \Pi_m\cup\tilde{\Pi}_m$. Let $B(A, r)$ denote  a disk centered at a point $A\in\mathcal{R}^2$ and with a radius $r$, thus $B(A, r) = \{X\in\mathds{R}^2\left| |X-A|\leq r \r.\}$. Moreover, the area of $B(A, r)$ is denoted as $\mathcal{A}(B(A, r))$. Thus the probability for the event that $X$ belongs to $\Pi_m$, or equivalently $P_X = \rho$, is
\begin{eqnarray}
\Pr(X\in\Pi_m) &=& \lim_{r \rightarrow 0} \frac{1-\exp\(\frac{\lambda}{M} \mathcal{A}(B(X, r))\)}{1-\exp\(\frac{\lambda+\tilde{\lambda}}{M} \mathcal{A}(B(X, r))\)}  \nn\\
&=& \lim_{r \rightarrow 0} \frac{\lambda\exp(\lambda \pi r^2/M) }{(\lambda+\tilde{\lambda})\exp((\lambda+\tilde{\lambda}) \pi r^2/M)} = \frac{\lambda}{\lambda + \tilde{\lambda}}.\nn
\end{eqnarray}
Similarly, $\Pr(X\in\tilde{\Pi}_m) = \frac{\tilde{\lambda}}{\lambda + \tilde{\lambda}}$. This completes the proof.

\subsection{Proof for Proposition~\ref{Prop:Pout:FreqShare}}\label{App:Pout:FreqShare}
The marked point process in  \eqref{Eq:MarkedPP} is modified to include the fading factor $G_X$ as an additional mark as follows
\begin{equation}\label{Eq:MarkedPP:a}
\hat{\Upsilon} := \l\{(X, P_X, G_X) \left| X \in  \Pi_m\cup\tilde{\Pi}_m\backslash\{U_0\}, P_X \in \{\rho, \tilde{\rho}\}, G_X\in\mathds{R}^+\r. \r\}.
\end{equation}
Following the approach discussed in Section~\ref{Section:Weber}, $\hat{\Upsilon}$ is divided into a strong-interferer sub-process conditioned on $(W=w, D=d)$, denoted as $\hat{\Upsilon}_S(w,d)$ and given as
\begin{equation}\label{Eq:MarkedPP:b}
\Upsilon_S(w,d) = \l\{(X, P_X, G_X) \left| (X, P_X, G_X)\in\Upsilon', P_X|X|^{-\alpha}G_X > \rho wd^{-\alpha}\theta^{-1}\r. \r\}
\end{equation}
and the weak-interferer process defined as $\hat{\Upsilon}_S^c(w,d) = \hat{\Upsilon}/\Upsilon_S(w,d)$. Thus, the sum interference power from weak interferers can be written as  $I_S^c(w,d) = \sum_{(X, P_X, G_X)\in\hat{\Upsilon}_S^c(w,d)}P_X|X|^{-\alpha}G_X$. To apply the analytical procedure in Section~\ref{Section:Weber}, it is sufficient to obtain $\E[|\hat{\Upsilon}_S(w,d)|]$, $\E[I_S^c(w,d)]$ and $\var\[I_S^c(w,d)\]$. Using the Marking Theorem \cite{Kingman93:PoissonProc} and Lemma~\ref{Lem:Dist:PPP},
\begin{eqnarray}
\E\[|\Upsilon_S(w,d)|\] &=& \frac{2\pi(\lambda + \tilde{\lambda})}{M}\[\Pr\(P_X = \rho\)\int_0^\infty\int_0^{(w^{-1}d^\alpha\theta g)^{\frac{1}{\alpha}}} r f_G(g) dr dg + \r.\nn\\
&& \l.\Pr\(P_X = \tilde{\rho}\)\int_0^\infty\int_0^{(\eta^{-1}w^{-1}d^\alpha\theta g)^{\frac{1}{\alpha}}} r f_G(g) dr dg\]\nn\\
&=& \frac{\zeta w^{-\delta}d^2(\lambda + \eta^{-\delta}\tilde{\lambda})}{M} \label{Eq:Density}
\end{eqnarray}
where $\zeta$ is defined in Lemma~\ref{Lem:Pout:FreqSep}. Next, $\E[I_S^c(w,d)]$ and $\var\[I_S^c(w,d)\]$ are derived using Campbell's Theorem \cite{Kingman93:PoissonProc} and Lemma~\ref{Lem:Dist:PPP} as follows
\begin{eqnarray}
\E[I_S^c(w,d)] &=& \frac{2\pi(\lambda + \tilde{\lambda})}{M}\[\Pr\(P_X = \rho\)\int_0^\infty\int_{(w^{-1}d^\alpha\theta g)^{\frac{1}{\alpha}}}^\infty (\rho r^{-\alpha} g ) r f_G(g) dr dg + \r.\nn\\
&& \l.\Pr\(P_X = \tilde{\rho}\)\int_0^\infty\int_{(\eta^{-1}w^{-1}d^\alpha\theta g)^{\frac{1}{\alpha}}}^\infty (\tilde{\rho} r^{-\alpha} g )r f_G(g) dr dg\]\nn\\
&=& \frac{\rho\delta}{1-\delta}\(\frac{\lambda+\eta^{-\delta}\tilde{\lambda}}{M}\)\zeta(w^{-1}d^\alpha)^{\delta-1}\theta^{-1},\label{Eq:Mean:Weak:a}\\
\var[I_S^c(w,d)] &=& \frac{2\pi(\lambda + \tilde{\lambda})}{M}\[\Pr\(P_X = \rho\)\int_0^\infty\int_{(w^{-1}d^\alpha\theta g)^{\frac{1}{\alpha}}}^\infty (\rho r^{-\alpha} g )^2 r f_G(g) dr dg + \r.\nn\\
&& \l.\Pr\(P_X = \tilde{\rho}\)\int_0^\infty\int_{(\eta^{-1}w^{-1}d^\alpha\theta g)^{\frac{1}{\alpha}}}^\infty (\tilde{\rho} r^{-\alpha} g )^2r f_G(g) dr dg\]\nn\\
&=& \frac{\rho^2\delta}{2-\delta}\(\frac{\lambda+\eta^{-\delta}\tilde{\lambda}}{M}\)\zeta(w^{-1}d^\alpha)^{\delta-2}\theta^{-2}.\label{Eq:Var:Weak:a}
\end{eqnarray}
Combining \eqref{Eq:Density}, \eqref{Eq:Mean:Weak:a}, \eqref{Eq:Var:Weak:a} and the analytical approach in Section~\ref{Section:Weber} gives the desired results.

\subsection{Proof for Lemma~\ref{Lem:PDF:D}}\label{App:PDF:D}
Let $Z$ denote the largest disk centered at a typical base station $B_0$ and contained inside the corresponding  Voronoi cell. Conditioned on $Z=z$, the CDF of $D$ of a typical inner-cell user is
\begin{equation}\label{Eq:PDF:D:Cond}
\Pr(D\leq t\mid Z=z) = \l\{\begin{aligned}
& 1,&& t\geq z\\
&\frac{t^2}{z^2},&& \text{otherwise.}
\end{aligned}\r.
\end{equation}
As a property of the random tessellation, the event $(Z \leq z)$ has the same probability as that where there is at least one other base station lying with in the distance of $2z$ from $B_0$ \cite{FossZuyev:VoronoiProcessPoisson:1996}. Mathematically
\begin{equation}\label{Eq:PDF:Z}
\Pr(Z\leq z) = 1 - e^{-4\pi\lambda_b z^2}.
\end{equation}
From \eqref{Eq:PDF:D:Cond} and \eqref{Eq:PDF:Z}
\begin{eqnarray}
\Pr(D\leq t) &=& \int_0^\infty \Pr(D\leq t\mid Z) f_Z(z) dz\nn\\
&=& \Pr(Z\leq t) + \int_t^\infty \frac{t^2}{z^2} \times 8\pi\lambda_b z e^{-4\pi\lambda_b z^2}dz\nn\\
&=& 8\pi\lambda_b t e^{-4\pi\lambda_b t^2} + 4\pi\lambda_bt^2\int_{4\pi\lambda_bt^2}^\infty z^{-1}e^{-z}dz.
\end{eqnarray}
Differentiating the above equation gives the desired result.

\subsection{Proof for Proposition~\ref{Lem:Pout:FreqSep}}\label{App:PoutBnds:SIC}
Only the bounds on $\Pout$ are proved. The proof for those on $\tPout$ is similar and thus omitted.

\subsubsection{Spectrum Overlay}
The interferers that are canceled at $B_0$ using SIC form a process defined as $\Sigma_C(w, d)  := \{X \in \Pi_m\backslash\{U_0\} \mid G_X D_X^{-\alpha}\geq  \kappa wd^{-\alpha}\theta^{-1}\}$. Define the process of strong interferers after SIC as $\Sigma_S(w,d) := \{X \in \Pi_m\backslash\{U_0\} \mid \theta^{-1}wd^{-\alpha} \leq G_T D_T^{-\alpha} \leq  \kappa wd^{-\alpha}  \}$. Note that $\kappa wd^{-\alpha}   > \theta^{-1}wd^{-\alpha} $ since $\kappa > 1$ and $\theta > 1$.  Thus, the process of weak interferers can be defined as $\Sigma_S^c(w,d) := (\Pi_m\backslash\{U_0\})/[\Sigma_S(w,d)\cup\Sigma_C(w,d)]$, which is observed to be identical to the counterpart for the case of no SIC. Since $\Sigma_S^c(w,d)\cap\Sigma_S(w,d) = \emptyset$, $\Sigma_S^c(w,d)$ and $\Sigma_S(w,d)$ are independent processes. From the discussion in Section~\ref{Section:Weber}, the exponential terms in \eqref{Eq:PoutLb:Exp} and \eqref{Eq:PoutUb:Exp} depends only on $\Sigma_S(w,d)$, and the function $\xi(w, d, \lambda/K)$ only on $\Sigma_S^c(w,d)$.  Since $\Sigma_S^c(w,d)$ is invariant to SIC, and $\Sigma_S^c(w,d)$ and $\Sigma_S(w,d)$ are independent, the bounds on $\Pout$ in Lemma~\ref{Lem:Pout:FreqSep} can be extended to the case of SIC by replacing the exponential term in \eqref{Eq:PoutLb:Exp} and \eqref{Eq:PoutUb:Exp}  with $\exp(-\E\[|\Sigma_S(w,d)|\])$, where $\E\[|\Sigma_S(w,d)|\]$ is obtained using Campbell's Theorem
\begin{equation}
\E\[\l|\Sigma_S(w, d)\r|\] = 2\pi\lambda\int_0^\infty\int_{(\kappa^{-1}w^{-1}d^\alpha g)^{\frac{1}{\alpha}}}^{(\theta w^{-1}d^\alpha g)^{\frac{1}{\alpha}}} rf_G(g)drdg = \chi \zeta w^{-\delta}d^2\frac{\lambda}{K}
\end{equation}
and $\chi$ is defined in the statement of the proposition.

\subsubsection{Spectrum Underlay}
With SIC, the strong and weak interferer process for $U_0$ are defined as $\widehat{\Sigma}_S(w,d) := \{X \in \Pi_m\backslash\{U_0\} \mid \theta^{-1}wd^{-\alpha} <  P_XG_X D_X^{-\alpha} \leq  \kappa wd^{-\alpha}  \}$ and $\widehat{\Sigma}^c_S(w,d) := \{X \in \Pi_m\backslash\{U_0\} \mid P_XG_X D_X^{-\alpha} \leq  \theta^{-1}wd^{-\alpha} \}$, respectively, where the distribution of $P_X$ is given in Lemma~\ref{Lem:Dist:PPP}. Based on the same arguments in the preceding section, the bounds on $\Pout$ in \ref{Eq:PoutLb:FreqShare} can be extended to the case of SIC by replacing their exponential terms with $\exp\(-\E\[|\widehat{\Sigma}_S(w,d)|\]\)$, where $\E\[|\widehat{\Sigma}_S(w,d)|\]$ is obtained using Campbell's Theorem as follows
\begin{eqnarray}
\E\[|\widehat{\Sigma}_S(w,d)|\] &=& \frac{2\pi(\lambda + \tilde{\lambda})}{M}\[\Pr\(P_X = \rho\)\int_0^\infty\int_{(\kappa^{-1}w^{-1}d^\alpha g)^{\frac{1}{\alpha}}}^{(w^{-1}d^\alpha\theta g)^{\frac{1}{\alpha}}} r f_G(g) dr dg + \r.\nn
\end{eqnarray}
\begin{eqnarray}
&& \l.\Pr\(P_X = \tilde{\rho}\)\int_0^\infty\int_{(\kappa^{-1}w^{-1}d^\alpha g)^{\frac{1}{\alpha}}}^{(\eta^{-1}w^{-1}d^\alpha\theta g)^{\frac{1}{\alpha}}} r f_G(g) dr dg\]\nn\\
&=& \frac{\chi \zeta w^{-\delta}d^2(\lambda + \eta^{-\delta}\tilde{\lambda})}{M}.\nn
\end{eqnarray}

\subsection{Proof for Theorem~\ref{Theo:Throughput}}\label{App:Throughput}
\subsubsection{Spectrum Overlay} The convergence $\epsilon \rightarrow 0$ implies  $\lambda\rightarrow 0$ and $\tilde{\lambda} \rightarrow 0$.
Using the series representation  of the PDF of a power shot-noise process \cite{Lowen:PowerLawShotNoise:1990}, the asymptotes of the outage probabilities follow from \cite[Theorem~2]{WeberAndrews:TransCapAdHocNetwkDistSch:2006}
\begin{equation}
\Pout = \lambda\zeta \E\[W^{-\delta}\]\E\[D^2\] + O\(\lambda^2\),\quad \tPout = \tilde{\lambda}\zeta \E\[\widetilde{W}^{-\delta}\]\tilde{d}^2 + O\(\tilde{\lambda}^2\). \label{Eq:Pout:c}
\end{equation}
By using \eqref{Eq:PDF:D:Cond} and \eqref{Eq:PDF:Z}, the term $\E\[D^2\]$ in \eqref{Eq:Pout:c} is obtained as follows
\begin{equation}
\E\[D^2\] = \E\[\int_0^z t^2 f_D(t\mid Z)dt\] = \E\[\frac{Z^2}{2}\] = \int_0^\infty \frac{z^2}{2} \times 8\pi\lambda_b ze^{-4\pi\lambda_b z^2}dz = \frac{1}{8\pi\lambda_b}. \label{Eq:ED2}
\end{equation}
Combining \eqref{Eq:TxCap}, \eqref{Eq:Pout:c}, and \eqref{Eq:ED2} gives the desired asymptotic capacity trade-off function for spectrum overlay.

\subsubsection{Spectrum Underlay}
By using the series expression  of the PDF of the power shot noise \cite{Lowen:PowerLawShotNoise:1990} as well as Proposition~\ref{Prop:Pout:FreqShare},
\begin{eqnarray}
\Pout(\lambda, \tilde{\lambda}) &=& \frac{\lambda + \eta^{-\delta}\tilde{\lambda}}{M}\zeta\E[W^{-\delta}]\E[D^2]  +O(\max(\lambda^2, \tilde{\lambda}^2))\label{App:Pout:a}\\
\tPout(\lambda, \tilde{\lambda}) &=& \frac{\eta^\delta\lambda + \tilde{\lambda}}{M}\zeta\E[\widetilde{W}^{-\delta}]\tilde{d}^2 + +O(\max(\lambda^2, \tilde{\lambda}^2)).\label{App:Pout:b}
\end{eqnarray}
For $\epsilon\rightarrow 0$, the transmission capacities $C$ and $\tilde{C}$ satisfy the constraints $\Pout(C/M, \tilde{C}/M)\leq \epsilon$ and $\tPout(C/M, \tilde{C}/M)\leq \epsilon$. By combining these constraints, \eqref{App:Pout:a} and \eqref{App:Pout:b}
\begin{eqnarray}
\frac{C + \eta^{-\delta}\tilde{C}}{M}\zeta\max\(\E[W^{-\delta}]\E[D^2], \eta^\delta\E[\widetilde{W}^{-\delta}]\tilde{d}^2\)  =\epsilon + O(\epsilon^2).
\end{eqnarray}
The desired result follows from the above equation.

\subsubsection{Spectrum Sharing with SIC}
Consider spectrum overlay with SIC. By canceling the strongest interferers using SIC, the PDF ``upper-tail" of the  power shot noise process is trimmed and its series expansion is difficult to find \cite{Lowen:PowerLawShotNoise:1990}. Nevertheless, the asymptotic transmission capacities can be characterized by expanding the bounds on $\Pout$ in Proposition~\ref{Prop:Pout:SIC}. Specifically
\begin{eqnarray}
\Pout^l(\lambda/K) &=&  \frac{\lambda}{K}\hat{\zeta}\E[W^{-\delta}]\E[D^2] + O(\lambda^2)\nn\\
\Pout^u(\lambda/K) &=&  1 - \E\[\(1-\frac{\delta}{2-\delta}\zeta W^{-\delta}D^2\frac{\lambda}{K} + O(\lambda^2)\)\(\frac{\lambda}{K}\hat{\zeta}W^{-\delta}D^2 + O(\lambda^2)\)\]\nn\\
&=& \(\frac{2}{2-\delta} - \theta^{-\delta}\kappa^{-\delta}\)\zeta\E[W^{-\delta}]\E[D^2]\frac{\lambda}{K}+O(\lambda^2).
\end{eqnarray}
Thus
\begin{equation}\label{Eq:PoutAsym}
\Pout(\lambda/K) = \chi \zeta\E[W^{-\delta}]\E[D^2]\frac{\lambda}{K}
\end{equation}
where $\(1 - \theta^{-\delta}\kappa^{-\delta}\)\leq \chi \leq \(\frac{2}{2-\delta} - \theta^{-\delta}\kappa^{-\delta}\)$. Similarly
\begin{equation}\label{Eq:tPoutAsym}
\tPout(\tilde{\lambda}/\tilde{K}) = \chi \zeta\E[\widetilde{W}^{-\delta}]\tilde{d}^2\frac{\tilde{\lambda}}{\tilde{K}}.
\end{equation}
The desired results for spectrum overlay with SIC are obtained by combining \eqref{Eq:TxCap}, \eqref{Eq:PoutAsym}, and \eqref{Eq:tPoutAsym}. The results for spectrum underlay with SIC are derived following a similar procedure.

%

\subsection{Proof for Corollary~\ref{Cor:CapRegion}}\label{App:CapRegion}
First, the capacity region for spectrum underlay is proved to be no larger than for spectrum overlay. It is sufficient to prove that $\mu_u\geq \mu_o$ and $\tilde{\mu}_u \geq \tilde{\mu}_o$, which follow from \eqref{Eq:CapTO:Wei}. Next,  substituting \eqref{Eq:PowerRatio:Optim} into \eqref{Eq:CapTO:Wei} results in $\mu_u = \mu_o$ and $\tilde{\mu}_u =  \tilde{\mu}_o$. This proves the second claim in the theorem statement.

\renewcommand{\baselinestretch}{1.2}
\bibliographystyle{ieeetr}

\newpage
\begin{figure}
\centering
  \includegraphics[width=12cm]{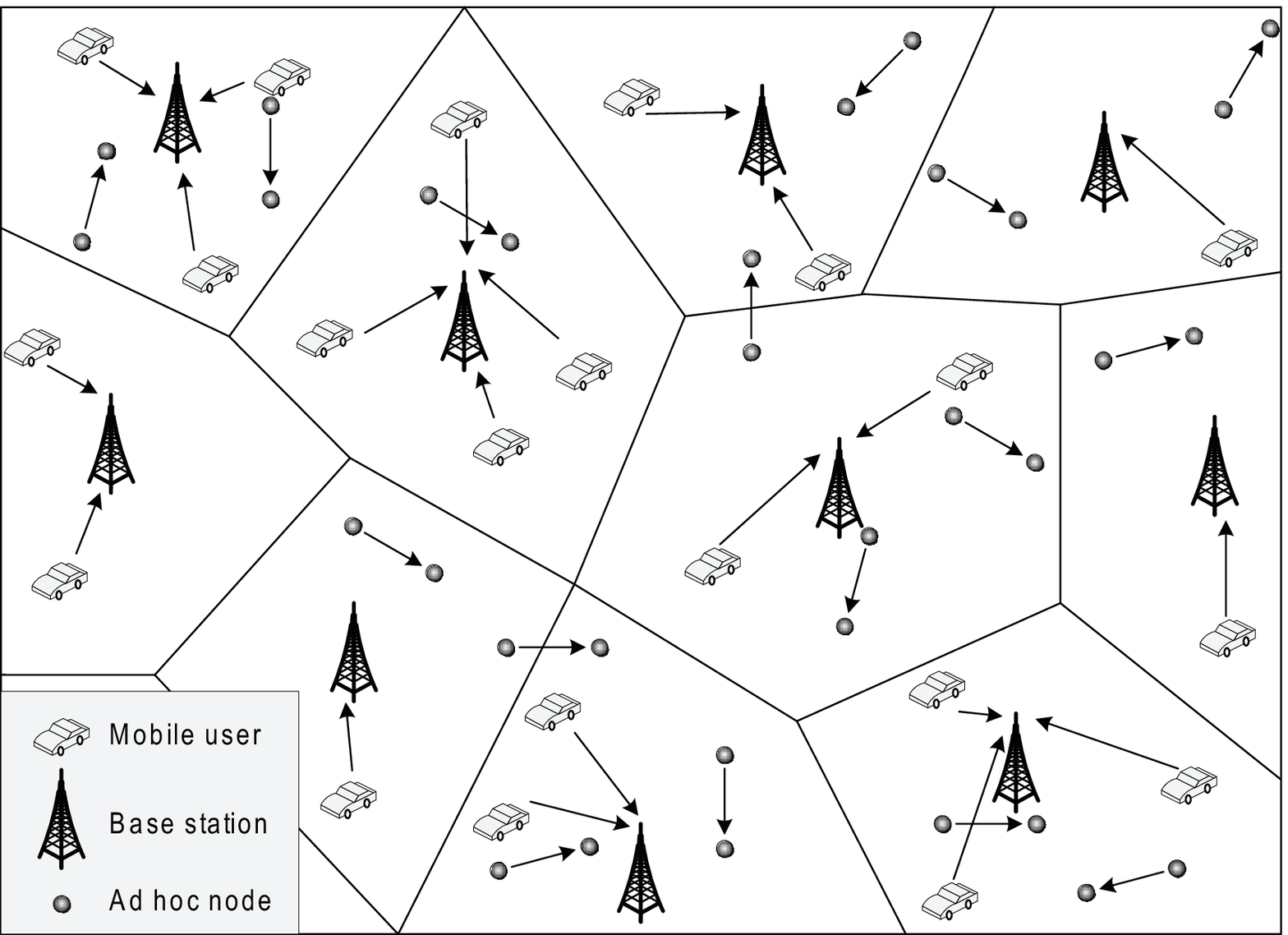}\vspace{10pt}\\
  \caption{The coexisting cellular and ad hoc networks}\label{Fig:Network}
\end{figure}

\begin{figure}
\centering
\subfigure[Spectrum overlay: cellular network]{  \includegraphics[width=8cm]{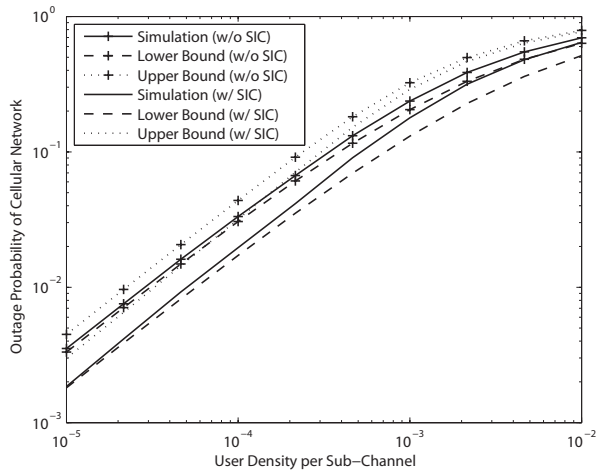}}
\subfigure[Spectrum overlay: ad hoc Network]{  \includegraphics[width=8cm]{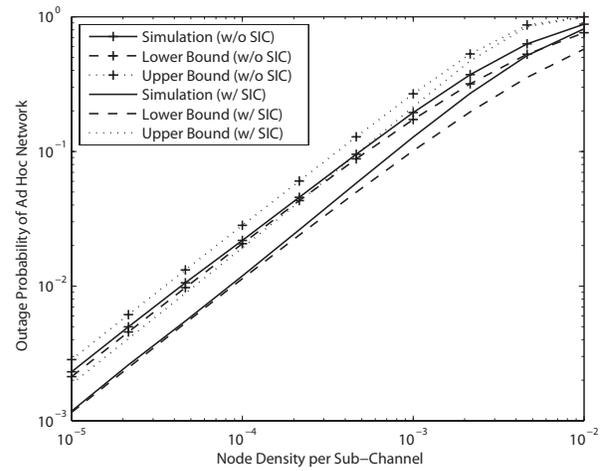}}\\
\subfigure[Spectrum underlay: cellular network]{  \includegraphics[width=8cm]{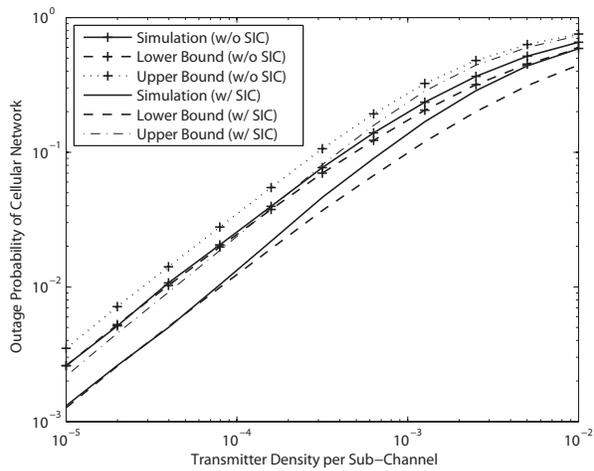}}
\subfigure[Spectrum underlay: ad hoc Network]{  \includegraphics[width=8cm]{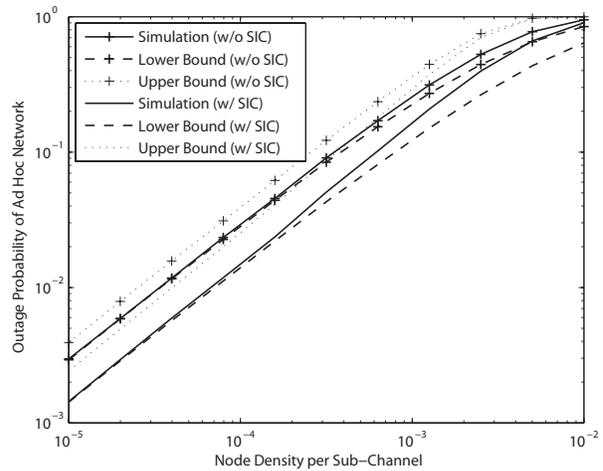}}\\
  \caption{Comparison between the theoretical bounds on outage probabilities and the simulated values. For spectrum underlay, the densities of users and ad hoc transmitters are set  equal, corresponding to one operational point on their trade-off curve. The sum density is referred to in the figures as the transmitter density.} \label{Fig:PoutCmp}
\end{figure}

\begin{figure}
\centering
\subfigure[Spectrum overlay]{  \includegraphics[width=12cm]{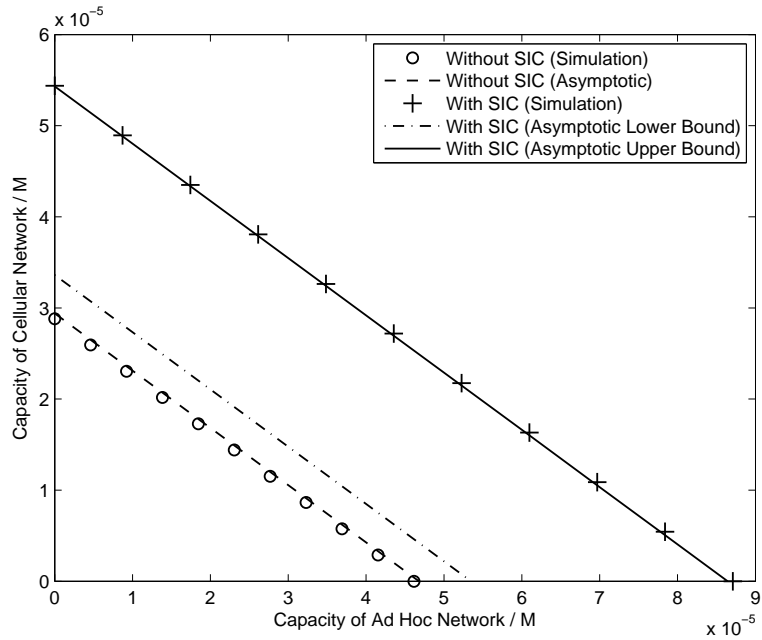}}\\
\subfigure[Spectrum underlay]{  \includegraphics[width=12cm]{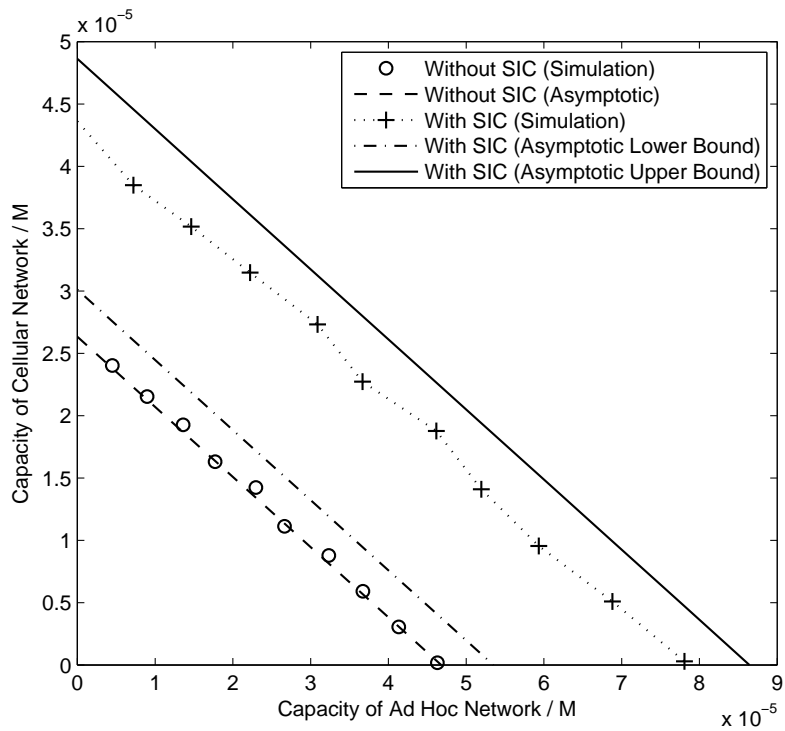}}\\
  \caption{Comparison between the asymptotic and the simulated transmission-capacity trade-off curves for the coexisting networks using (a) spectrum overlay or (b) spectrum underlay}\label{Fig:TxCap}
\end{figure}

\end{document}